\newtheorem{theorem}{Theorem}
\newtheorem{lemma}{Lemma}
\newtheorem{proposition}{Proposition}
\theoremstyle{definition}
\begin{document}


\renewcommand{\baselinestretch}{1.2}

\markright{ \hbox{\footnotesize\rm Statistica Sinica
}\hfill\\[-13pt]
\hbox{\footnotesize\rm
}\hfill }

\markboth{\hfill{\footnotesize\rm SIMENG QU AND XIAO WANG} \hfill}
{\hfill {\footnotesize\rm FILL IN A SHORT RUNNING TITLE} \hfill}

\renewcommand{\thefootnote}{}
$\ $\par


\fontsize{10.95}{14pt plus.8pt minus .6pt}\selectfont
\vspace{0.8pc}
\centerline{\large\bf Optimal Global Test for Functional Regression}
\vspace{2pt}
\centerline{Simeng Qu and Xiao Wang}
\vspace{.4cm}
\centerline{\it Department of Statistics, Purdue University}
\vspace{.55cm}
\fontsize{9}{11.5pt plus.8pt minus .6pt}\selectfont


\begin{quotation}
\noindent {\it Abstract:}
This paper studies the optimal testing for the nullity of the slope function
in the functional linear model using smoothing splines. 
We propose a generalized likelihood ratio test
based on an easily implementable data-driven estimate. 
The quality of the test is measured by the minimal distance
between the null and the alternative set that still allows a possible test. 
The lower bound of the minimax decay rate of this distance is derived, and test with a distance that decays faster than the lower bound would be impossible.
We show that the minimax optimal rate is
jointly determined by the smoothing spline kernel and the covariance kernel.
It is shown that our test attains this optimal
rate. 
Simulations are carried out to confirm the finite-sample performance of our test as well as to illustrate the theoretical results. Finally, we apply our test to study the effect
of the trajectories of oxides of nitrogen ($\text{NO}_{\text{x}}$) on the level of ozone
($\text{O}_3$).\par

\vspace{9pt}
\noindent {\it Key words and phrases:}
Functional linear model, generalized likelihood ratio test, minimax rate of convergence, reproducing kernel, smoothing splines.
\par
\end{quotation}\par

\def\thefigure{\arabic{figure}}
\def\thetable{\arabic{table}}

\fontsize{10.95}{14pt plus.8pt minus .6pt}\selectfont


\section{Introduction}

Functional linear regression model, with respect to nonparametric estimation
and prediction, has drawn extensive attention in the field of functional data analysis in recent years. The model is stated as 
\begin{equation}
Y=a_{0}+\int_{0}^{1}\beta_{0}(t)X(t)dt+\epsilon,\label{equ:flr}
\end{equation}
where $Y$ is a scalar response, $X:[0,1]\rightarrow\mathbb{R}$ is
a square integrable random functional predictor, $\alpha_{0}\in\mathbb{R}$
is the intercept, $\beta_{0}:[0,1]\rightarrow\mathbb{R}$ is the slope
function, and $\epsilon$ is the random error with mean zero and variance
$\sigma^{2}$. 
One of the popular methods to study such model is based on the functional
principal component analysis \citep{james_02,ramsay_05,yao_05,cai_06,li_07,hall_07}.
In addition, regularization method has also been applied to study the model \citep{crambes_09,yuan_10,cai_12, du_14, wang_15}.
Although the asymptotic properties of estimators of $\beta_{0}$ are
widely discussed in the literature, there is little research on
testing whether $\beta_{0}$ resides in a given finite
dimensional linear subspace, or more specifically, $\beta_{0}\equiv0$.

\begin{figure}
\centering
\includegraphics[width=3.8in]{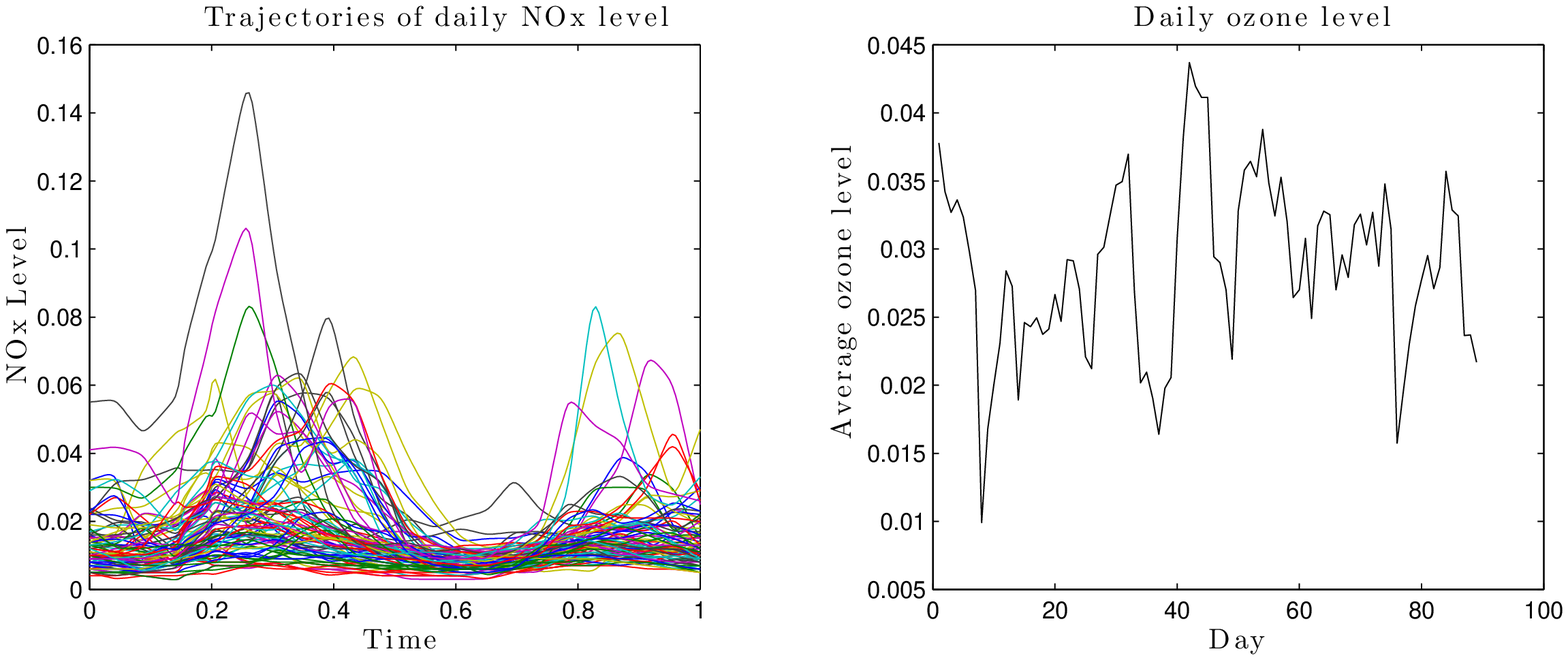}
\includegraphics[width=2.0in]{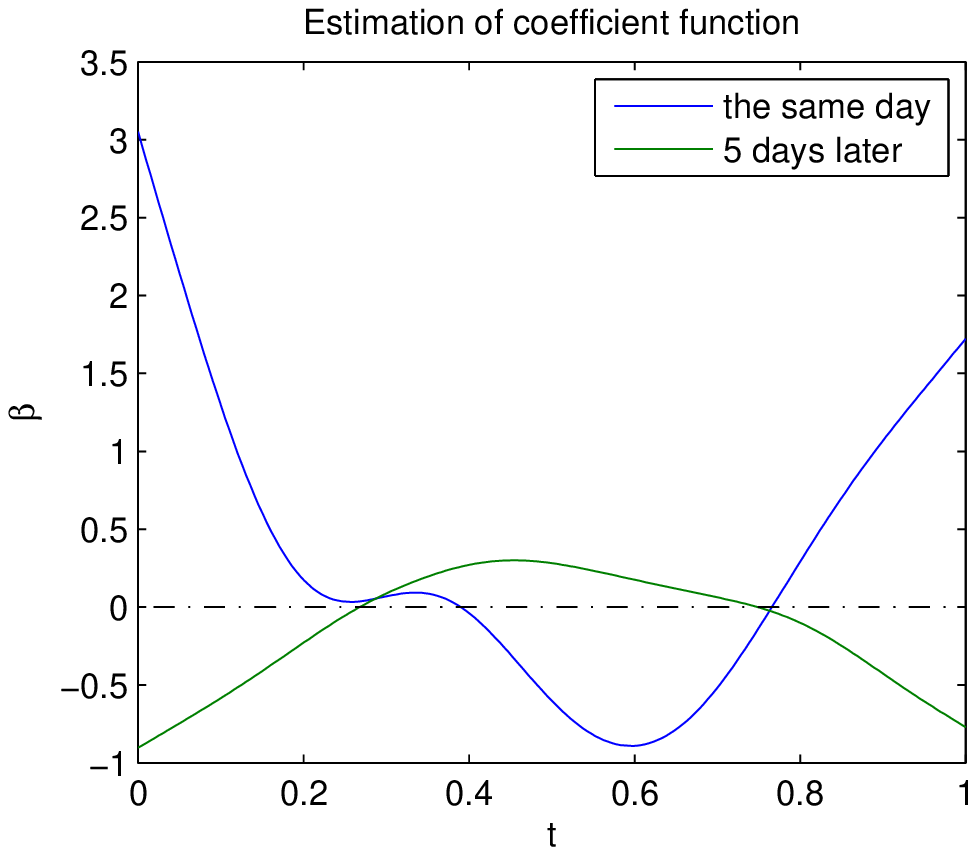}
\caption{Left: the daily trajectories of $\text{NO}_{\text{x}}$ levels. Middle: average $\text{O}_3$ level each day.
Right: estimated coefficient functions}
\label{fig:oz} 
\end{figure}

Take the study of California air quality data as an example.Effects of oxides of nitrogen ($\text{NO}_{\text{x}}$)
on levels of ozone ($\text{O}_3$) is always of great interest to meteorology researchers. The left and middle panels of Figure \ref{fig:oz}
displays the daily trajectories of $\text{NO}_{\text{x}}$ levels as well as the daily average $\text{O}_3$ levels
in the city of Sacramento from June 1 to August 31 in 2005. 
If we take daily $\text{NO}_{\text{x}}$ trajectory as predictor $X(t)$ and average $\text{O}_3$ level as
$Y$, then an absent effect will be indicated by a zero slope function in model (\ref{equ:flr2}). 
The right panel of Figure \ref{fig:oz} plots the estimated slope functions under two settings, 
(1) response $Y$ is the $\text{O}_3$ level of the same day as $\text{NO}_{\text{x}}$ trajectory, and 
(2) response is the $\text{O}_3$ level five days later after the recorded $\text{NO}_{\text{x}}$ trajectory. 
Under setting (1), the estimated slope function has a large magnitude and a clear curve. 
This indicates that the true slope function in this model is very unlikely to be a zero function, that is to say, a day's $\text{NO}_{\text{x}}$ has a strong effect on its $\text{O}_3$ level. 
On the other hand, the estimated slope function under setting (2) stays close to zero and the slight curvature of this estimated slope function may due to randomness of the data, with the true $\beta_{0}$ residing in a zero null space. 
In other words, a day's $\text{NO}_{\text{x}}$ may barely have any effect on the $\text{O}_3$ level five days later.
However to draw a statistical conclusion under a certain significant level on whether there is
still some effect on the $\text{O}_3$ level from the $\text{NO}_{\text{x}}$ level five days ago, we need a well-designed
testing procedure.

\cite{cardot_03} proposed a test statistic based on
the first $k$ functional components of $X$.
However, selection of $k$ is a difficult problem.
Some computational methods have
been proposed to resolve this issue without theoretical guarantee on
the power \citep{cardot_04,gonzalez_11}.
For more recent work, \cite{hilgert_13}
used the functional principle component approach to test the nullity of the slope
function,
and established that their procedures
are minimax adaptive to the unknown regularity of the slope. 
In particular, they assumed that $\beta_0\in {\cal E}_a(L)$ where
\[
{\cal E}_a(L) = \Big\{\beta\in L_2[0, 1]: \sum_{k=1}^\infty a_k^{-2} \big\langle\beta, \varphi_k \big\rangle^2 \le L^2\Big\},
\]
with $\langle \beta, \varphi_k\rangle=\int_0^1 \beta(t)\varphi_k(t)dt$, and $\varphi_{k}$'s are eigenfunctions of the covariance $\Gamma$. 
The smoothness of $\beta_0$ is characterized by the decay rate of $a_k$. 
${\cal E}_a(L)$ is essentially a reproducing kernel Hilbert space (RKHS), 
denoted by ${\cal H}(K)$, with a specific reproducing kernel $K(t, s) = \sum_{k=1}^\infty a_k^2 \varphi_k(t)\varphi_k(s)$. When their underline assumption that, kernel $K$ and $\Gamma$ are well aligned, is not satisfied, their methods may not perform well.
\cite{Lei_14}
developed a method simultaneously testing the slope vectors in a sequence
of functional principal components regression models, and showed that
under certain conditions, his method is uniformly powerful over a
class of smooth alternatives. 
However, the principal-component-based methods
are successful upon the assumption that the slope function $\beta(t)$
can be well represented by the leading functional principal components
of $X$. \cite{cai_12} showed that, for the benchmark Canadian weather data, the estimated Fourier coefficients of the slope function with respect to the eigenfunctions of the sample covariance function do not decay at all, which is a typical example for the case that the slope function is not well represented by the leading principal components. \cite{shang2015} proposed a roughness regularization approach in making non-parametric inference for generalized functional linear model, including a theoretical result on the upper bound. 


In this paper, we propose an adaptive and minimax optimal testing procedures on detecting the nullity
of the slope function in functional linear model
using smoothing splines. 
Let $\Gamma(s,t)$ denote the
covariance function of $X$. $\Gamma$ can also be taken as a nonnegative
definite operator with $\Gamma f=\int_{0}^{1}\Gamma(\cdot,~t)f(t)dt$
for $f\in L_{2}$. 
We wish to test the null hypothesis $H_{0}:\beta\equiv0$
against the composite nonparametric alternative that
$\beta_{0}$ is separated away from zero in terms of a $L_{2}$-norm
induced by the operator $\Gamma$, i.e. $\|\beta_{0}\|_{\Gamma}\ge\varrho_{n}$,
where $\|\beta\|_{\Gamma}^{2}=\langle\Gamma\beta,\beta\rangle$ with
$\langle\beta,\gamma\rangle=\int_{0}^{1}\beta(t)\gamma(t)dt$. Then assuming
that the unknown slope function $\beta_{0}$ possesses some smoothness
properties, therefore, we arrive at the following alternative: 
$H_{1}:{\cal F}_{\Gamma}(\rho_{n})=\{\beta:\|\beta\|_{\Gamma}\ge\rho_{n}\}$.
The radius $\rho_{n}$ characterizes the sensitivity of the
test. We investigate the optimal decay rate of the radius
$\rho_{n}$, under which the test with prescribed probabilities of
errors is still possible.

The paper is organized as follows. In Section \ref{sec:upper}, a
smoothing spline estimate for the slope function is introduced, and
a generalized likelihood ratio test based on this smoothing spline
estimate is proposed. 
In Section \ref{sec:lower},
we show that our test is optimal in the sense that it achieves the minimax lower bound,
which is joint determined by the smoothing spline kernel and the covariance kernel.
Section \ref{sec:numerical} demonstrates the finite sample performance of
the test under different simulated setups. Later in this section come
more details about the air quality example. 

\section{Generalized Likelihood Ratio Test\label{sec:upper}}

\subsection{Notation and definitions}
Since our main focus is on the coefficient function $\beta(t)$, we assume both
$X$ and $Y$ are centered, i.e., $E(Y)=0$ and $E(X(t))=0$ for all $t$. Therefore by taking
expectation over both sides of (\ref{equ:flr}), we have $\alpha_{0}=0$. 
Let $(X_{i},Y_{i}),i=1,\ldots,n$ be independent and identically distributed observations sampled from the model.
Then model (\ref{equ:flr}) can be rewritten as 
\begin{equation}
Y_{i}=\int_{0}^{1}\beta_{0}(t)X_{i}(t)dt+\epsilon_{i},~~~~i=1,\ldots,n.\label{equ:flr2}
\end{equation}

$\beta_{0}(t)$ is considered to reside in the Sobolev space $W_{2}^{m}$ of order $m$, defined as 
\[
W_{2}^{m}=\Big\{\beta:[0,1]\rightarrow\mathbb{R}\Big|\beta,\beta',\ldots,\beta^{(m-1)}\text{ are absolutely continuous and }\beta^{(m)}\in L_{2}[0,1]\Big\}.
\]
Equipting $W_{2}^{m}$ with a reproducing kernel
\[
K(t,s)=\sum_{k=0}^{m-1}\frac{s^{k}t^{k}}{(k!)^{2}}+R(t,s), 
\quad \text{where }
R(t,s)=\int_{0}^{1}\frac{(s-u)_{+}^{m-1}(t-u)_{+}^{m-1}}{\{(m-1)!\}^{2}}du,
\]
it becomes a reproducing kernel Hilbert
space \citep{wahba_90}, denoted as ${\cal H}(K)$.

Let $T_{0}$ and $T_{1}$ be operators on $L_{2}[0,1]$ such that
\[
T_{0}X(t)=\int_{0}^{t}X(s)ds~~~~~\mbox{and}~~~~~T_{1}X(t)=\int_{t}^{1}X(s)ds.
\]
It follows Fubini's theorem that $\langle f,T_{0}g\rangle=\langle T_{1}f,g\rangle$,
and thus $T_{0}$ is the adjoint operator to $T_{1}$. Further, define
that $T_{0}^{k}X(t)=T_{0}T_{0}^{k-1}X(t)$ and $T_{1}^{k}X(t)=T_{1}T_{1}^{k-1}X(t)$
for $k\ge2$. Therefore, $T_{0}^{k}$ is the adjoint operator to $T_{1}^{k}$, and
\begin{align*}
T_{0}^{k}X(t)=\int_{0}^{1}\frac{(t-s)_{+}^{k-1}}{(k-1)!}X(s)ds,~~~~~T_{1}^{k}X(t)=\int_{0}^{1}\frac{(s-t)_{+}^{k-1}}{(k-1)!}X(s)ds.
\end{align*}
In particular, 
\begin{equation*}
R=T_{0}^{m}T_{1}^{m}.\label{equ:RT}
\end{equation*}
Observe that $R$ differs from $K$ only by a polynomial of degree
less than or equal to $m$. Therefore, their eigenvalues
have the same decay rate. 

The following notations will be used in estimating slope function 
and then constructing test statistic. Denote ${\bf X}(t)=(X_{1}(t),\ldots,X_{n}(t))^{T}$
and sample
covariance function $\hat{\Gamma}(t,s)=n^{-1}{\bf X}(t)^{T}{\bf X}(s)$. 
Let $\widetilde{X}(1)\in\mathbb{R}^{m\times n}$
be an $m$ by $n$ matrix with the $(i,j)'s$ element $(\widetilde{X}(1))_{i,j}=T_{0}^{i}X_{j}(1)$
and $\hat{H}=n^{-1}\widetilde{X}(1)\widetilde{X}(1)^{T}$. Define
a matrix $\hat{B}=\frac{1}{n}\widetilde{X}(1)^{T}\hat{H}^{-1}\widetilde{X}(1)$,
then $\hat{B}$ is an $n\times n$ idempotent matrix with $\hat{B}^{2}=\hat{B}$.
Finally, define an operator $\hat{Q}$ as $\hat{Q}(t,s)=n^{-1}\hat{U}(t)^{T}\hat{U}(s)$,
where $\hat{U}(t)$ is a random function vector such that 
\begin{align*}
\hat{U}(t)=(I_{n}-\hat{B})T_{0}^{m}{\bf X}(t).
\end{align*}
It is easy to see that 
\begin{align*}
\hat{Q}=n^{-1}T_{0}^{m}{\bf X}^{T}(I_{n}-\hat{B})T_{0}^{m}{\bf X}=T_{0}^{m}\big(\hat{\Gamma}-\hat{\Gamma}_{0}\big)T_{1}^{m},\label{equ:hatQ}
\end{align*}
where 
\[
\hat{\Gamma}_{0}(t,s)=\frac{1}{n}{\bf X}(t)^{T}B{\bf X}(s),
\]
is a degenerated operator with at
most $m$ eigenvalues. Hence, the eigenvalues of $\hat{Q}$, $T_{0}^{m}\hat{\Gamma}T_{1}^{m}$
and further $T\Gamma T^{*}$ have the same decay rate.

\subsection{The smoothing spline estimator}

In this section, we study the smoothing spline estimate which will
be used to construct the generalized likelihood ratio test in the
next session. Let $\hat{\beta}$ be the smoothing spline estimate
such that $\hat{\beta}\in W_{2}^{m}$ minimizes 
\begin{equation}
\frac{1}{n}\sum_{i=1}^{n}\Big\{Y_{i}-\int_{0}^{1}\beta(t)X_{i}(t)dt\Big\}^{2}+\lambda\int_{0}^{1}\Big\{\beta^{(m)}(s)\Big\}^{2}ds,\label{equ:obj0}
\end{equation}
where $\lambda>0$ is the smoothing parameter. Next theorem provides
the characterization of $\hat{\beta}$.
\begin{theorem}
\label{thm:beta} Denote ${\bf Y}=(Y_{1},\ldots,Y_{n})^{T}$ and operator
$\hat{Q}^{+}=(\lambda I+\hat{Q})^{-1}$. 

(a). The $m$th derivative of $\hat{\beta}$ is 
\begin{equation*}
\hat{\beta}^{(m)}=(-1)^{m}\frac{1}{n}\hat{Q}^{+}\hat{U}^{T}{\bf Y}.\label{equ:betam}
\end{equation*}

(b). Let $\widehat{\Upsilon}(1)=\Big[\hat{\beta}(1),-\hat{\beta}'(1),\ldots,(-1)^{m-1}\hat{\beta}^{(m-1)}(1)\Big]^{T}$.
We have 
\begin{equation*}
\widehat{\Upsilon}(1)=\frac{1}{n}\hat{H}^{-1}\widetilde{X}(1)\Big\{I_{n}-\frac{1}{n}\int_{0}^{1}T_{0}^{m}{\bf X}(s)~\hat{Q}^{+}\hat{U}(s)^{T}ds\Big\}{\bf Y}.\label{equ:beta1}
\end{equation*}
\end{theorem}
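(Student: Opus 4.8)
The plan is to derive the characterization by working in the reproducing kernel Hilbert space ${\cal H}(K)$ and exploiting the orthogonal decomposition of $W_2^m$ into the null space of the penalty (polynomials of degree $\le m-1$) and its orthogonal complement equipped with the kernel $R = T_0^m T_1^m$. First I would write a generic $\beta \in W_2^m$ via Taylor expansion with integral remainder: $\beta(t) = \sum_{k=0}^{m-1} \beta^{(k)}(0)\, t^k/k! + \int_0^t (t-u)^{m-1}\beta^{(m)}(u)/(m-1)!\,du = \sum_{k=0}^{m-1}\beta^{(k)}(0)\,t^k/k! + (T_0^m \beta^{(m)})(t)$. The penalty $\int_0^1 \{\beta^{(m)}\}^2$ depends only on the function $g := \beta^{(m)} \in L_2[0,1]$, while the polynomial coefficients $\mathbf{d} = (\beta(0),\beta'(0),\dots,\beta^{(m-1)}(0))^T$ are unpenalized. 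Substituting into $\int_0^1 \beta(t) X_i(t)\,dt$ and using that $\langle f, T_0^m g\rangle = \langle T_1^m f, g\rangle$, the fitted values become linear in $(\mathbf{d}, g)$; specifically $\int_0^1 \beta X_i = \mathbf{d}^T \widetilde{X}_i + \langle T_1^m X_i, g\rangle$ where $\widetilde X_i$ is the $i$-th column of $\widetilde X(1)$ up to the reparametrization at the endpoint — here I would be careful to track whether the natural expansion point is $0$ or $1$, which is what forces the evaluation-at-$1$ bookkeeping and the alternating signs $(-1)^k$ in $\widehat\Upsilon(1)$ and part (a).

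Next I would set up the objective as a function of $(\mathbf{d}, g) \in \mathbb{R}^m \times L_2$, namely $\tfrac1n \|\mathbf{Y} - A\mathbf{d} - \Phi g\|^2 + \lambda \|g\|_{L_2}^2$ where $A$ has rows $\widetilde X_i^T$ and $\Phi: L_2 \to \mathbb{R}^n$ sends $g \mapsto (\langle T_1^m X_i, g\rangle)_i$, with adjoint $\Phi^* \mathbf{v} = \sum_i v_i\, T_1^m X_i$. Taking the Gateaux derivative in $g$ and in $\mathbf{d}$ gives the normal equations: $\Phi^*(\mathbf{Y} - A\mathbf{d} - \Phi g) = n\lambda g$ and $A^T(\mathbf{Y} - A\mathbf{d} - \Phi g) = 0$. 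From the first equation, $g = (n\lambda I + \Phi^*\Phi)^{-1}\Phi^*(\mathbf{Y} - A\mathbf{d})$. Now one identifies $\Phi^*\Phi$ and $\Phi^*$ in terms of the objects defined in the excerpt: $\Phi\Phi^*$ has $(i,j)$ entry $\langle T_1^m X_i, T_1^m X_j\rangle = \langle X_i, T_0^m T_1^m X_j\rangle$, and more to the point, applying $T_0^m$ turns $\Phi^*$ into the operator with kernel $n\hat Q$-type structure once the $\hat B$-projection (which encodes eliminating $\mathbf{d}$) is inserted. The cleanest route is: first profile out $\mathbf{d}$. Because $\hat B = \tfrac1n \widetilde X(1)^T \hat H^{-1}\widetilde X(1)$ is exactly the hat matrix of the regression on the columns of $A = \tfrac1n$-scaled $\widetilde X(1)^T$, the residual operator $I_n - \hat B$ projects onto the orthogonal complement of the polynomial part, and substituting the minimizing $\mathbf{d}$ replaces every $\mathbf{Y} - A\mathbf{d}$ by $(I_n - \hat B)(\mathbf{Y} - \Phi g)$ and every $\Phi^* X_i$ by $\hat U_i = (I_n-\hat B)T_0^m \mathbf{X}$ read appropriately. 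One then gets $g = n^{-1}\hat Q^+ \hat U^T \mathbf{Y}$ up to sign, and since $\hat\beta^{(m)} = g$ composed with the $(-1)^m$ coming from $T_0^m$ versus $T_1^m$ duality (i.e. $\beta^{(m)}$ of $T_0^m g'$ returning $g'$, but the fitted remainder naturally produces $T_1^m$), this is exactly part (a).

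For part (b), once $\hat\beta^{(m)}$ is in hand, the vector $\widehat\Upsilon(1) = (\hat\beta(1), -\hat\beta'(1),\dots,(-1)^{m-1}\hat\beta^{(m-1)}(1))^T$ is recovered from the stationarity condition in $\mathbf{d}$ — equivalently, from the interpolation/boundary relation that determines the polynomial part of $\hat\beta$ given its $m$-th derivative and the data. Concretely, plug $g = \hat\beta^{(m)}$ back into the normal equation $A^T(\mathbf{Y} - A\mathbf{d} - \Phi g) = 0$, solve $\mathbf{d} = (A^TA)^{-1}A^T(\mathbf{Y} - \Phi g)$, and translate $(A^TA)^{-1}A^T$ into $\tfrac1n \hat H^{-1}\widetilde X(1)$ and $\Phi g = \Phi \hat Q^+ \hat U^T \mathbf{Y}/n = \big(\tfrac1n \int_0^1 T_0^m \mathbf{X}(s)\, \hat Q^+ \hat U(s)^T\,ds\big)\mathbf{Y}$, yielding the displayed formula. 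The expansion-point change from $0$ to $1$ (so that $\widehat\Upsilon(1)$ involves $\hat\beta^{(k)}(1)$ and the columns of $\widetilde X(1) = (T_0^i X_j(1))$) is just the statement that $T_0^i X_j(1) = \int_0^1 (1-s)^{i-1} X_j(s)\,ds/(i-1)!$ are the coefficients when the remainder is anchored at the right endpoint; the signs $(-1)^k$ are the Jacobian of that reparametrization.

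The main obstacle I expect is purely bookkeeping rather than conceptual: correctly matching the two natural but different coordinate systems — Taylor expansion anchored at $0$ (which makes the penalty decomposition transparent) versus the endpoint-$1$ representation used in the statement (which makes the operators $T_0^m$, $T_1^m$, $\hat B$, $\hat Q$ come out with the stated forms) — and tracking the resulting alternating signs and the $(-1)^m$ in part (a). A secondary technical point is justifying that the infinite-dimensional minimization over $g \in L_2$ (equivalently over $\beta \in W_2^m$) is well-posed and that the representer/normal-equation manipulation is valid, i.e. that the minimizer exists and is characterized by the stationarity conditions — this follows from strict convexity of the penalized objective in $(\mathbf{d}, g)$ on the Hilbert space and the boundedness of the evaluation functionals, but it should be stated. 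Everything else is linear algebra with the idempotent $\hat B$ and the resolvent $\hat Q^+$.
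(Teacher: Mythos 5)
Your proposal is correct and follows essentially the same route as the paper: the paper's proof is precisely a calculus-of-variations/normal-equations argument in which polynomial test directions give the orthogonality conditions encoded by $\hat{B}$ (your equation in $\mathbf{d}$), the Taylor expansion anchored at $t=1$ reduces general variations to variations of $\beta^{(m)}$ (your equation in $g$), and the resulting stationarity condition $(\lambda I+\hat{Q})\hat{\beta}^{(m)}=(-1)^{m}n^{-1}\hat{U}^{T}\mathbf{Y}$ together with the back-substituted polynomial normal equation yields parts (a) and (b). The bookkeeping issue you flag --- anchoring the expansion at $1$ rather than $0$, which produces the alternating signs and the factor $(-1)^{m}$ --- is exactly how the paper arrives at $\hat{U}$, $\hat{Q}$ and $\widehat{\Upsilon}(1)$ in the stated form.
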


Theorem \ref{thm:beta} provides a brand new approach to compute $\hat{\beta}$
explicitly over the infinitely dimensional function space ${\cal H}(K)$.
This observation is important to both numerical implementation and
asymptotic analysis. The explicit formula for $\hat{\beta}$ is 
\begin{equation}
\hat{\beta}(t)=\widehat{\Upsilon}(1)^{T}\zeta(t)+(-1)^{m}\int_{0}^{1}\hat{\beta}^{(m)}(s)\frac{(s-t)_{+}^{m-1}}{(m-1)!}ds=\Pi_{t}{\bf Y}\label{equ:beta_hat}
\end{equation}
where $\zeta(t)=\Big[1,(1-t),\frac{(1-t)^{2}}{2!},\ldots,\frac{(1-t)^{m-1}}{(m-1)!}\Big]^{T}$,
and
\[
\Pi_{t}=\frac{1}{n}\zeta(t)^{T}\hat{H}^{-1}\widetilde{X}(1)\Big\{I_{n}-\frac{1}{n}\int_{0}^{1}T_{0}^{m}{\bf X}(s)~\hat{Q}^{+}\hat{U}(s)^{T}ds\Big\}+\frac{1}{n}T_{1}^{m}\hat{Q}^{+}\hat{U}(t)^{T}.
\]
Therefore, $\hat{\beta}$ is a linear function of the response ${\bf Y}$
with $\Pi_{t}$ as the hat matrix.

\subsection{Generalized likelihood ratio test\label{sec:upper2}}

Assuming that $\epsilon_{i}$
follows normal distribution, the conditional log-likelihood
function for (\ref{equ:flr2}) becomes 
\[
\ell_{n}(\beta,\sigma)=-n\log(\sqrt{2\pi}~\sigma)-\frac{1}{2\sigma^{2}}\sum_{i=1}^{n}\Big(Y_{i}-\int\beta X_{i}\Big)^{2}.
\]
Define the residual sum of squares under the null and alternative
hypothesis as follows: 
\[
\mbox{RSS}_{0}=\sum_{i=1}^{n}Y_{i}^{2},~~~\mbox{RSS}_{1}=\sum_{i=1}^{n}(Y_{i}-\int\hat{\beta}X_{i})^{2}.
\]
Then the logarithm of the conditional maximum likelihood ratio test
statistic is given by 
\begin{equation}
\tau_{n,\lambda}=\ell_{n}(\hat{\beta},\hat{\sigma}_{1})-\ell_{n}(0,\hat{\sigma}_{0})=\frac{n}{2}\log\frac{\mbox{RSS}_{0}}{\mbox{RSS}_{1}},\label{equ:tau}
\end{equation}
where $\hat{\sigma}_{1}^{2}=\mbox{RSS}_{1}/n$ and $\hat{\sigma}_{0}^{2}=\mbox{RSS}_{0}/n$.
Define an $n\times n$ matrix $A_{n}=A_{n}(\boldsymbol{X})$ as

\[
A_{n}=\frac{1}{n}\int_{0}^{1}\hat{U}(t)\hat{Q}^{+}\hat{U}(t)^{T}dt-\frac{1}{2n}\int_{0}^{1}\int_{0}^{1}\hat{Q}^{+}\hat{U}(t)\hat{Q}(t,s)\hat{Q}^{+}\hat{U}(s)^{T}dtds+\frac{1}{2}\hat{B}.
\]
Next theorem shows the properties of the test statistic $\tau_{n,\lambda}$.
\begin{theorem}
\label{thm:test}. If $tr(A_{n})=o_{p}(n)$, we have the following
results,

(a). Under $H_{0}:\beta\equiv0$, the likelihood ratio test statistic
$\tau_{n,\lambda}$ is of the form 
\begin{align*}
\tau_{n,\lambda}=z^{T}A_{n}z+o_{p}\big(1\big),
\end{align*}
where $z=\epsilon/\sigma$. Furthermore,let $\mu_{n}=tr(A_{n})$ and
$\sigma_{n}^{2}=2\,tr(A_{n}^{2})$. If $\epsilon_{i}$, $i=1,...,n$
are independent and identically distributed following $\mathcal{N}(0,\sigma^{2})$, then $(\tau_{n,\lambda}-\mu_{n})/\sigma_{n}$has
an asymptotic standard normal distribution.

(b). Under $H'_{1}:{\cal F}'_{K,\Gamma}(\rho_{n})=\Big\{\beta\in{\cal H}(K):\|\beta\|_{\Gamma}=\rho_{n}\Big\},$
if $\rho_{n}^{2}=o(n^{-1/2})$ and $\lambda=o(n^{-1/2})$, then 
\[
\tau_{n,\lambda}=z^{T}A_{n}z+\frac{n}{2\sigma^{2}}\|\beta_{0}\|_{\hat{\Gamma}}^{2}+O_{p}\Big(n\lambda+n^{1/2}\lambda^{1/2}+n^{1/2}\|\beta_{0}\|_{\hat{\Gamma}}\Big).
\]
\end{theorem}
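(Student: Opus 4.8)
The plan is to first reduce $\tau_{n,\lambda}$ to a quadratic form in $\mathbf Y$, identify that form with $A_n$, and then expand the logarithm.

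\emph{Step 1: reduction to a quadratic form.}
Write $\widehat{\mathbf m}=\big(\int\hat\beta X_1,\dots,\int\hat\beta X_n\big)^T$, so $\mathrm{RSS}_0=\|\mathbf Y\|^2$ and $\mathrm{RSS}_1=\|\mathbf Y-\widehat{\mathbf m}\|^2$. Substituting the representation of $\hat\beta$ from Theorem~\ref{thm:beta} into $\int X_i\hat\beta$, using that $\int_0^1 X_i(t)\zeta(t)\,dt$ is the $i$th column of $\widetilde X(1)$ and the adjoint relation $\langle X_i,T_1^m f\rangle=\langle T_0^m X_i,f\rangle$, gives $\widehat{\mathbf m}=\mathbf W\mathbf Y$ for an $n\times n$ matrix $\mathbf W$. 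The identities $\hat B^2=\hat B$ and $\hat B\hat U=0$ (immediate from $\hat U=(I_n-\hat B)T_0^m\mathbf X$), together with $T_0^m X_i=\hat U_i+(\hat B T_0^m\mathbf X)_i$, collapse $\mathbf W$ into the symmetric form $\mathbf W=\hat B+\tfrac1nG$ with $G_{ij}=\langle\hat U_i,\hat Q^+\hat U_j\rangle$; the push-through identity then yields $\tfrac1nG=\hat V(\lambda I_n+\hat V)^{-1}$ with $\hat V=\tfrac1n[\langle\hat U_i,\hat U_j\rangle]_{ij}\succeq0$ and $\hat B\hat V=0$. Since $\hat Qf=\tfrac1n\sum_k\hat U_k\langle\hat U_k,f\rangle$, a short computation identifies the double-integral term in the definition of $A_n$ with $-\tfrac1{2n^2}G^2$, whence $2A_n=\hat B+\tfrac2nG-\tfrac1{n^2}G^2=2\mathbf W-\mathbf W^2$. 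As $\mathbf W=\mathbf W^T$,
\[
\mathrm{RSS}_0-\mathrm{RSS}_1=\mathbf Y^T\big(2\mathbf W-\mathbf W^T\mathbf W\big)\mathbf Y=2\,\mathbf Y^TA_n\mathbf Y,\qquad
\tau_{n,\lambda}=-\frac n2\log\!\Big(1-\frac{2\,\mathbf Y^TA_n\mathbf Y}{\|\mathbf Y\|^2}\Big).
\]
Finally $A_n=\tfrac12\hat B+\tfrac1nG-\tfrac12\big(\tfrac1nG\big)^2$ has spectrum in $[0,\tfrac12]$, so $\|A_n\|_{\mathrm{op}}\le\tfrac12$ and $0\le 2\,\mathbf Y^TA_n\mathbf Y\le\|\mathbf Y\|^2$.

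\emph{Step 2: part (a).}
Under $H_0$, $\mathbf Y=\epsilon=\sigma z$ and $\tau_{n,\lambda}=-\tfrac n2\log\big(1-2z^TA_nz/\|z\|^2\big)$. The law of large numbers gives $\|z\|^2=n\{1+o_p(1)\}$, and since $\mathrm{tr}(A_n^2)\le\tfrac12\mathrm{tr}(A_n)=o_p(n)$ the quadratic form concentrates, $z^TA_nz=o_p(n)$, so $2z^TA_nz/\|z\|^2=o_p(1)$. Expanding $-\log(1-w)=w+\tfrac12 w^2+\cdots$ and controlling the remainders with $\|A_n\|_{\mathrm{op}}\le\tfrac12$ and the trace conditions yields $\tau_{n,\lambda}=z^TA_nz+o_p(1)$. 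For the normal limit, center: $z^TA_nz=\mathrm{tr}(A_n)+z^T\big(A_n-\tfrac{\mathrm{tr}(A_n)}nI_n\big)z$, the centred form having mean $0$ and variance $2\,\mathrm{tr}(A_n^2)-2\,\mathrm{tr}(A_n)^2/n\sim\sigma_n^2$; the CLT for quadratic forms in i.i.d.\ Gaussians applies because the eigenvalues of $A_n$ are uniformly bounded (by $\tfrac12$) while $\mathrm{tr}(A_n^2)\to\infty$ (Lindeberg condition). Hence $(\tau_{n,\lambda}-\mu_n)/\sigma_n\Rightarrow N(0,1)$.

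\emph{Step 3: part (b).}
Write $\mathbf Y=\mathbf m_0+\epsilon$ with $(\mathbf m_0)_i=\int\beta_0 X_i$, so $\|\mathbf m_0\|^2=n\|\beta_0\|_{\hat\Gamma}^2=o_p(n^{1/2})$ by $\rho_n^2=o(n^{-1/2})$. Then $\|\mathbf Y\|^2=\|\epsilon\|^2+2\epsilon^T\mathbf m_0+\|\mathbf m_0\|^2=n\sigma^2\{1+o_p(1)\}$ and $\mathbf Y^TA_n\mathbf Y=\epsilon^TA_n\epsilon+2\epsilon^TA_n\mathbf m_0+\mathbf m_0^TA_n\mathbf m_0$, with $|\epsilon^TA_n\mathbf m_0|=O_p(\|A_n\mathbf m_0\|)=O_p(\|\mathbf m_0\|)=O_p\big(n^{1/2}\|\beta_0\|_{\hat\Gamma}\big)$ by $\|A_n\|_{\mathrm{op}}\le\tfrac12$. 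For the signal term, $A_n=\mathbf W-\tfrac12\mathbf W^2$ and $\mathbf W=\mathbf W^T$ give $\tfrac12 I_n-A_n=\tfrac12(I_n-\mathbf W)^2$ with $I_n-\mathbf W=\lambda(\lambda I_n+\hat V)^{-1}(I_n-\hat B)$, so
\[
\tfrac12\|\mathbf m_0\|^2-\mathbf m_0^TA_n\mathbf m_0=\tfrac{\lambda^2}{2}\big\|(\lambda I_n+\hat V)^{-1}(I_n-\hat B)\mathbf m_0\big\|^2,
\]
which is the squared bias of a ridge fit with ridge parameter $\lambda$ and, under the assumed smoothness of $\beta_0$ together with $\lambda=o(n^{-1/2})$, is $O_p(n\lambda)$. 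Substituting into the first-order term $\tfrac n2\cdot 2\,\mathbf Y^TA_n\mathbf Y/\|\mathbf Y\|^2=\sigma^{-2}\mathbf Y^TA_n\mathbf Y\{1+o_p(1)\}$ and absorbing the higher-order log terms and the $O_p(n^{-1/2})$ normalisation error (both negligible since $\|\mathbf m_0\|^2/n=o(n^{-1/2})$) gives
\[
\tau_{n,\lambda}=z^TA_nz+\frac{n}{2\sigma^2}\|\beta_0\|_{\hat\Gamma}^2+O_p\big(n\lambda+n^{1/2}\lambda^{1/2}+n^{1/2}\|\beta_0\|_{\hat\Gamma}\big),
\]
the $n^{1/2}\lambda^{1/2}$ term coming from the cross-interaction of the ridge bias with the Gaussian noise.

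\emph{Main obstacle.}
The crux is Step~1: the exact identity $\mathrm{RSS}_0-\mathrm{RSS}_1=2\,\mathbf Y^TA_n\mathbf Y$ hinges on the cancellations $\hat B\hat U=0$ and the collapse of the polynomial part, which force $\mathbf W$ into the symmetric shape $\hat B+\tfrac1nG$ and, via $\hat Qf=\tfrac1n\sum_k\hat U_k\langle\hat U_k,f\rangle$, match the double-integral term of $A_n$ to $-\tfrac1{2n^2}G^2$; carrying out these operator manipulations is the bulk of the work. The second delicate point (Step~3) is the quantitative control of the ridge bias $\tfrac{\lambda^2}{2}\|(\lambda I_n+\hat V)^{-1}(I_n-\hat B)\mathbf m_0\|^2$, i.e.\ showing that replacing the penalised fit by $\beta_0$ costs only $O_p(n\lambda+n^{1/2}\lambda^{1/2})$ under $\lambda=o(n^{-1/2})$ and $\rho_n^2=o(n^{-1/2})$ — a standard but calculation-heavy eigenvalue estimate relating the spectrum of $\hat V$ (equivalently that of $T_0^m\hat\Gamma T_1^m$) to the smoothness of $\beta_0$.
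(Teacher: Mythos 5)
Your proposal is correct and arrives at the same expansion, but it packages the reduction differently from the paper. The paper goes through Lemma~\ref{lem:expansion}: it expands $\int X(\hat\beta-\beta_0)$ and $\|\hat\beta-\beta_0\|_{\hat\Gamma}^2$ from the first-order conditions of Theorem~\ref{thm:beta}, substitutes into $\mathrm{RSS}_1-\mathrm{RSS}_0$, and under $H_1'$ bounds each of the resulting terms separately (the $\lambda^2$ bias via $\sup_{x\ge0}x/(\lambda+x)^2\le 1/(4\lambda)$ and $\sum_j\hat\eta_j^2<\infty$, the two $\lambda$-cross terms via their conditional variances $O(\lambda/n)$). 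You instead prove one exact matrix identity, $\mathrm{RSS}_0-\mathrm{RSS}_1=2\,\mathbf Y^TA_n\mathbf Y$, by showing the smoother matrix $\mathbf W=\hat B+n^{-1}G$ is symmetric with $\hat BG=0$ and $A_n=\mathbf W-\tfrac12\mathbf W^2$; both parts then follow from the single polarization $\mathbf Y=\mathbf m_0+\epsilon$. The identities you invoke ($n^{-1}G=\hat V(\lambda I+\hat V)^{-1}$, the double-integral term of $A_n$ equalling $-(2n^2)^{-1}G^2$, $I_n-\mathbf W=\lambda(\lambda I_n+\hat V)^{-1}(I_n-\hat B)$) all check out, and the quantitative content coincides: your ridge-bias term is exactly the paper's $\lambda^2\sum_k\hat\kappa_k\hat\eta_k^2/(\lambda+\hat\kappa_k)^2=O(\lambda)$ estimate in matrix form, and your cross term $2\epsilon^TA_n\mathbf m_0=\epsilon^T\mathbf m_0-\epsilon^T(I_n-\mathbf W)^2\mathbf m_0$ yields both the $O_p(n^{1/2}\|\beta_0\|_{\hat\Gamma})$ and $O_p(n^{1/2}\lambda^{1/2})$ remainders. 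What your route buys is that $H_1'$ costs no extra expansion once the identity is in place; what it defers to Step~1 is essentially all of Lemma~\ref{lem:expansion}'s content. One shared looseness, not specific to you: passing from $-\tfrac n2\log(1-w)$ to $z^TA_nz+o_p(1)$ needs $\mathrm{tr}(A_n)\cdot O_p(n^{-1/2})$ and $n^{-1}(z^TA_nz)^2$ to be negligible, which the bare hypothesis $\mathrm{tr}(A_n)=o_p(n)$ does not by itself deliver (it does hold for the $\lambda$ used in Theorem~\ref{thm:ttheory}); the paper's own proof has the identical step.
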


The condition that $tr(A_{n})=o_{p}(n)$ in Theorem \ref{thm:test}
can be satisfied in many cases. In fact, $tr(A_{n})$ can be computed
explicitly. Consider the spectral decomposition of operator $\hat{Q}$,
$\hat{Q}(t,s)=\sum_{j=1}^{\infty}\hat{\kappa}_{j}\hat{\phi}_{j}(t)\hat{\phi}_{j}(s)$,
where $(\hat{\kappa}_{j},\hat{\phi}_{j})$ are (eigenvalue, eigenfunction)
pairs, ordered such that $\hat{\kappa}_{1}\ge\hat{\kappa}_{2}\ge\cdots\ge0$.
We may write $\hat{U}_{X_{i}}(t)=\sum_{k=1}^{\infty}\hat{\xi}_{ik}\hat{\phi}_{k}(t)$.
Since $\hat{Q}(t,s)=n^{-1}\sum_{i=1}^{n}\hat{U}_{X_{i}}(t)\hat{U}_{X_{i}}(s)$,
we have $n^{-1}\sum_{i=1}^{n}\hat{\xi}_{ik}^{2}=\hat{\kappa}_{k}$
and $n^{-1}\sum_{i=1}^{n}\hat{\xi}_{ik}\hat{\xi}_{ij}=0$ for $k\neq j$.
It is not hard to obtain that 
\[
tr(A_{n})=\sum_{k=1}^{\infty}\frac{\hat{\kappa}_{k}(\lambda+\frac{1}{2}\hat{\kappa}_{k})}{(\lambda+\hat{\kappa}_{k})^{2}}+\frac{m}{2}.
\]
Furthermore, Lemma~3  shows that $tr(A_{n})=O_{p}(\sum_{k=1}^{\infty}\frac{s_{k}}{\lambda+s_{k}})$,
which is determined by the order of $\lambda$ and the decay rate
of $s_{k}$, the sorted eigenvalues of linear operator $T\Gamma T^{*}$.
More specifically, if $s_{k}$ has a polynomial decay rate as $s_{k}\asymp k^{-2r}$,
for some $r>1/2$, then $tr(A_{n})=O_{p}(\lambda^{-1/2r})$, while
if $s_{k}$ has an exponential decay rate as $s_{k}\asymp e^{-2rk}$
for some $r>0$, then $tr(A_{n})=O(\log\mbox{\ensuremath{\lambda}}^{-1})$.
In both cases, $tr(A_{n})=o_{p}(n)$ will be satisfied once we choose
a proper $\lambda$. The optimal order of $\lambda$ will be shown
later in Theorem \ref{thm:ttheory}, followed by a data-driven procedure of
choosing $\lambda$. 

Based on Theorem $\ref{thm:test}$, we have an $\alpha$ level testing procedure that,
we reject $H_{0}$ when $\frac{\tau_{n,\lambda}-\mu_{n}}{\sigma_{n}}>z_{\alpha}$
where $z_{\alpha}$ is the upper $\alpha$ quantile of the standard
normal distribution. In the next section, we will show that the power
function of this test is asymptotically one at the minmax optimal
rate.

\section{Optimal Test}
\label{sec:lower} 

\subsection{Minimax lower bound}

Let $\phi_{n}$ be a measurable function of the observations taking
values at two points $\{0,1\}$. We accept $H_{0}$ if $\phi_{n}=0$,
and reject $H_{0}$ if $\phi_{n}=1$. The probability of type I error,
denoted by $\alpha_{0}(\phi_{n})$, is 
\[
\alpha_{0}(\phi_{n})=\mathbb{P}_{0}(\phi_{n}=1),
\]
where $\mathbb{P}_{0}$ is the probability measure on the space of
observations corresponding to $H_{0}$. The probability of type II
error, denoted by $\alpha_{1}(\phi_{n})$, is 
\[
\alpha_{1}(\phi_{n},\rho_{n})=\sup_{\beta\in{\cal F}_{K,\Gamma}(\rho_{n})}\mathbb{P}_{\beta}(\phi_{n}=0),
\]
where $\mathbb{P}_{\beta}$ is the probability measure corresponding
to a particular slope function $\beta$. Let 
\[
\gamma_{n}(\phi_{n},\rho_{n})=\alpha_{0}(\phi_{n})+\alpha_{1}(\phi_{n},\rho_{n}),
\]
which measures the error of the test $\phi_{n}$ by summarizing
probability of the type I and type II errors. Fix a number
$0<\gamma<1$. A sequence $\rho_{n}\rightarrow0$
as $n\rightarrow\infty$ is called the minimax rate of testing if: 
\begin{description}
\item [{(i)}] For any sequence $\rho'_{n}$ such that $\rho'_{n}/\rho_{n}\rightarrow0$,
we have $\lim\inf_{n\rightarrow\infty}\inf_{\phi_{n}}\gamma_{n}(\phi_{n},\rho_{n}')\ge\gamma$; 
\item [{(ii)}] There exists a test $\phi_{n}^{*}$ such that $\lim\sup_{n\rightarrow\infty}\gamma_{n}(\phi_{n}^{*},\rho_{n})\le\gamma$. 
\end{description}
For the given reproducing kernel $K$, let $T$ and $T^{*}$ be two
operators acting on $L_{2}[0,1]$ such that $K=TT^{*}$, where $T^{*}$
is the adjoint operator to $T$ with $\langle f,Tg\rangle=\langle T^{*}f,g\rangle$.
Consider the linear operator $T\Gamma T^{*}$. It follows from the spectral
theorem that 
\[
T\Gamma T^{*}(t,s)=\sum_{k=1}^{\infty}s_{k}\varphi_{k}(t)\varphi_{k}(s),
\]
where $s_{1}\ge s_{2}\ge\cdots>0$ are the eigenvalues of the operator
$T\Gamma T^{*}$ and $\varphi_{k}$'s are the corresponding eigenfunctions.
For any two sequences $a_{k},b_{k}>0$, $a_{k}\asymp b_{k}$ means
that $a_{k}/b_{k}$ is bounded away from zero and infinity as $k\rightarrow\infty$.
\begin{theorem}
\label{thm:lower} Assume $\epsilon_{i}$, $i=1,...,n$ are independent and identically distributed following $\mathcal{N}(0,\sigma^{2})$.
Let $\{s_{k}:k\ge1\}$ be the sorted eigenvalues of the linear operator
$T\Gamma T^{*}$. 

(a). When $s_{k}\asymp k^{-2r}$ for some constant $r>1/2$, let 
\begin{equation}
\rho_{n}=n^{-2r/(1+4r)}.\label{equ:rho}
\end{equation}
If $\rho_{n}'$ is such that $\rho_{n}'/\rho_{n}\rightarrow0$ as
$n\rightarrow\infty$, then 
\[
\liminf_{n\rightarrow\infty}\inf_{\phi_{n}}\gamma_{n}(\phi_{n},\rho_{n}')\ge1.
\]

(b). When $s_{k}\asymp e^{-2rk}$ for some constant $r>0$, let 
\begin{equation}
\rho_{n}=\Big(\frac{\log n}{2rn^{2}}\Big)^{1/4}.\label{equ:rho2}
\end{equation}
If $\rho_{n}'$ is such that $\rho_{n}'/\rho_{n}\rightarrow0$ as
$n\rightarrow\infty$, then 
\[
\liminf_{n\rightarrow\infty}\inf_{\phi_{n}}\gamma_{n}(\phi_{n},\rho_{n}')\ge1.
\]

\end{theorem}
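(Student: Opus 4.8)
The plan is Ingster's device: bound the minimax risk from below by a Bayes risk and control a chi-square divergence. Fix $\rho_n'$ with $\rho_n'/\rho_n\to0$. For any prior $\pi$ concentrated on the alternative $\mathcal{F}_{K,\Gamma}(\rho_n')$, put $\mathbb{P}_\pi=\int\mathbb{P}_\beta\,d\pi(\beta)$; then
$\inf_{\phi_n}\gamma_n(\phi_n,\rho_n')\ge 1-\|\mathbb{P}_0-\mathbb{P}_\pi\|_{\mathrm{TV}}\ge 1-\tfrac12\sqrt{\chi^2(\mathbb{P}_\pi\,\|\,\mathbb{P}_0)}$, where $\chi^2(\mathbb{P}_\pi\,\|\,\mathbb{P}_0)+1=\mathbb{E}_0[(d\mathbb{P}_\pi/d\mathbb{P}_0)^2]$. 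Hence it suffices to exhibit, for each decay regime, a single prior on the alternative whose likelihood-ratio second moment tends to $1$.

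I would construct the prior in the basis that simultaneously diagonalizes $\|\cdot\|_\Gamma$ and $\|\cdot\|_{\mathcal{H}(K)}$. With $K=TT^*$ and $\{(\varphi_k,s_k)\}$ the eigensystem of $T\Gamma T^*$, set $\beta_\omega=\sum_{k\in B}a_k\,\omega_k\,T\varphi_k$ with the $\omega_k$ i.i.d.\ Rademacher and $B$ a frequency block to be chosen; the crucial feature is that $\|\beta_\omega\|_\Gamma^2=\sum_{k\in B}a_k^2 s_k$ and $\|\beta_\omega\|_{\mathcal{H}(K)}^2=\sum_{k\in B}a_k^2$ are deterministic, so I can fix $(B,\{a_k\})$ with $\sum_{k\in B}a_k^2 s_k=(\rho_n')^2$ and $\sum_{k\in B}a_k^2$ bounded, and then every $\beta_\omega\in\mathcal{F}_{K,\Gamma}(\rho_n')$. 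Since $\mathbb{P}_\beta$ and $\mathbb{P}_0$ share the marginal law of $X$, the likelihood ratio is the conditional Gaussian one, and expanding the square gives, conditionally on $X_1,\dots,X_n$,
\[
\mathbb{E}_0\big[(d\mathbb{P}_\pi/d\mathbb{P}_0)^2\,|\,X\big]=\mathbb{E}_{\omega,\omega'}\exp\Big(\tfrac{1}{\sigma^2}\textstyle\sum_{i=1}^n\langle\beta_\omega,X_i\rangle\langle\beta_{\omega'},X_i\rangle\Big).
\]
Averaging over $X$ and using $\mathbb{E}[\langle T\varphi_k,X\rangle\langle T\varphi_l,X\rangle]=s_k\delta_{kl}$, the right-hand side is $\mathbb{E}_{\omega,\omega'}\big(\mathbb{E}_X\exp(\sigma^{-2}\langle\beta_\omega,X\rangle\langle\beta_{\omega'},X\rangle)\big)^n$, which to leading order equals $\prod_{k\in B}\cosh(n a_k^2 s_k/\sigma^2)\le\exp\!\big(\tfrac{1}{2\sigma^4}\sum_{k\in B}n^2 a_k^4 s_k^2\big)$. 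Thus everything reduces to choosing $B,\{a_k\}$ with $\sum_{k\in B}a_k^2 s_k=(\rho_n')^2$, $\sum_{k\in B}a_k^2$ bounded, and $n^2\sum_{k\in B}a_k^4 s_k^2\to0$.

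For (a) I take $B=\{N{+}1,\dots,2N\}$ with $a_k\equiv a$ on $B$; then $\sum_B a_k^2 s_k\asymp a^2 N^{1-2r}$, $\sum_B a_k^2\asymp a^2 N\asymp N^{2r}(\rho_n')^2$, and $n^2\sum_B a_k^4 s_k^2\asymp n^2(\rho_n')^4/N$. Picking $N$ above $n^2(\rho_n')^4$ kills the chi-square term while keeping $N^{2r}(\rho_n')^2$ bounded, precisely because $\rho_n'\ll\rho_n=n^{-2r/(1+4r)}$ is the statement that $n^2(\rho_n')^{4+1/r}\to0$, i.e.\ the interval of admissible $N$ is nonempty. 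For (b) the eigenvalues drop exponentially, so I take $B=\{1,\dots,N\}$ and $a_k^2\asymp\tau^2 s_k^{-1}$ so that $a_k^2 s_k$ is flat across the block; then $\sum_B a_k^2 s_k\asymp N\tau^2=(\rho_n')^2$, $\sum_B a_k^2\asymp\tau^2\sum_{k\le N}e^{2rk}\asymp(\rho_n')^2 e^{2rN}/N$, and $n^2\sum_B a_k^4 s_k^2\asymp n^2 N\tau^4=n^2(\rho_n')^4/N$. Choosing $N=\lfloor\log n/(2r)\rfloor$ makes the smoothness sum vanish (hence bounded) and gives $n^2(\rho_n')^4/N\asymp 2r\,n^2(\rho_n')^4/\log n\to0$, again precisely because $\rho_n'\ll\rho_n=(\log n/2rn^2)^{1/4}$; it is this match between the block width $\log n/(2r)$ and the noise that produces the constant in $\rho_n$. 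In both cases $\chi^2(\mathbb{P}_\pi\|\mathbb{P}_0)\to0$, so $\liminf_n\inf_{\phi_n}\gamma_n(\phi_n,\rho_n')\ge1$.

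The main obstacle is the passage from the conditional identity to the product of $\cosh$'s. This requires controlling the fluctuation of the empirical Gram matrix $\big(n^{-1}\sum_i\langle T\varphi_k,X_i\rangle\langle T\varphi_l,X_i\rangle\big)_{k,l\in B}$ about $\mathrm{diag}(s_k)$ — including its off-diagonal entries, which enter through $\mathbb{E}_{\omega,\omega'}$ — and the remainder in $\mathbb{E}_X\exp(\sigma^{-2}\langle\beta_\omega,X\rangle\langle\beta_{\omega'},X\rangle)=1+\sigma^{-2}\sum_k a_k^2 s_k\omega_k\omega_k'+(\text{higher order})$ once it is raised to the $n$-th power. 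Both are benign provided $|B|=o(n)$ — which holds, since $|B|\asymp n^{2/(1+4r)}$ in (a) and $|B|\asymp\log n$ in (b) — and $X$ has enough moments; if $X$ is Gaussian the factor $\mathbb{E}_X\exp(\cdots)$ has an exact Gaussian-chaos form and the bookkeeping is clean. I would carry the estimate out on a high-probability design event, bound the complementary event crudely, and, if needed, apply a mild truncation of the prior so that the likelihood ratio stays bounded there.
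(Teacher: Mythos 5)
Your overall strategy is the one the paper uses: Ingster's reduction to a Rademacher-mixture prior spread over a frequency block, the $\chi^2$/second-moment bound on the likelihood ratio, the product-of-$\cosh$ identity, $\log\cosh x\le Bx^2$, and then exactly the same rate algebra ($n^2(\rho_n')^{4+1/r}\to0$ in the polynomial case, $n^2(\rho_n')^4/\log n\to0$ with block width $\log n/(2r)$ in the exponential case). The one substantive divergence is the choice of basis. You diagonalize with the \emph{population} eigenfunctions $\varphi_k$ of $T\Gamma T^*$, which forces you to average the likelihood ratio over $X$ and therefore to control $\bigl(\mathbb{E}_X\exp(\sigma^{-2}\langle\beta_\omega,X\rangle\langle\beta_{\omega'},X\rangle)\bigr)^n$ together with the off-diagonal fluctuations of the empirical Gram matrix --- the obstacle you correctly flag at the end. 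The paper sidesteps this entirely by building the prior from the \emph{empirical} eigenfunctions $\hat\varphi_k$ of $T\hat\Gamma T^*$, normalized so that $\|g_k\|_{\hat\Gamma}=1$ and $\sum_i x_{ik}x_{ij}=n\delta_{jk}$ hold exactly; conditionally on the design the second moment is then \emph{exactly} $\{\cosh(u^2n/\sigma^2)\}^M$, with no mgf of $X$ and no Gram-matrix concentration needed inside the exponent. The price is that the prior is design-dependent, so one must instead check (as the paper does, somewhat tersely) that $\|f_\xi\|_\Gamma\asymp\|f_\xi\|_{\hat\Gamma}$ and $\hat s_M\asymp s_M$ with probability tending to one, a far milder concentration statement.

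The gap you acknowledge is therefore real and not cosmetic: Theorem 3 assumes only Gaussian $\epsilon$, not Gaussian $X$, and for a general square-integrable $X$ the factor $\mathbb{E}_X\exp(\sigma^{-2}\langle\beta_\omega,X\rangle\langle\beta_{\omega'},X\rangle)$ need not even be finite, so the ``leading order equals $\prod_k\cosh(na_k^2s_k/\sigma^2)$'' step cannot be carried out as written without either a Gaussian-design assumption or a fully worked truncation/conditioning argument; $|B|=o(n)$ alone does not suffice. If you want to keep the population basis, the cleanest fix is precisely the paper's: condition on $X_1,\dots,X_n$ throughout, replace $\varphi_k$ by $\hat\varphi_k$, and push all design randomness into the (easy) verification that $f_\xi$ lands in ${\cal F}_{K,\Gamma}(\rho_n')$. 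Two smaller points: your perturbation directions should be $T^*\varphi_k$ rather than $T\varphi_k$ (only $T^*\varphi_k$ is guaranteed to lie in ${\cal H}(K)$ and to satisfy $\mathbb{E}\langle T^*\varphi_k,X\rangle\langle T^*\varphi_l,X\rangle=s_k\delta_{kl}$), and in case (a) your dyadic block $\{N+1,\dots,2N\}$ with flat $a_k$ is interchangeable with the paper's $\{1,\dots,M\}$ with $a_k^2\propto s_k^{-1}$ since $s_k\asymp k^{-2r}$ makes the two weightings comparable on such a block.
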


The cholesky decomposition of the operator $K=TT^{*}$ is not unique, 
and $T$ is not necessarily a symmetric operator. If we would like
$T$ to be a symmetric operator, we may choose $T=T^{*}=K^{1/2}$.
It is shown in the next proposition that the decay rate of
the eigenvalues of the operator $T\Gamma T^{*}$ and $K^{1/2}\Gamma K^{1/2}$ have the same asymptotic
order.
\begin{proposition}
\label{prop:kt} Let $K=TT^{*}$, where $T^{*}$ is adjoint to $T$.
The eigenvalues of the two operators $T\Gamma T^{*}$ and $K^{1/2}\Gamma K^{1/2}$
have the same decay rate. 
\end{proposition}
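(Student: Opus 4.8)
The plan is to show that $T\Gamma T^{*}$ and $K^{1/2}\Gamma K^{1/2}$ are related by a similarity-type transformation that preserves the asymptotic order of eigenvalues, even though the two operators need not be equal. The starting observation is that both $TT^{*} = K = K^{1/2}K^{1/2}$, so $T$ and $K^{1/2}$ are ``square roots'' of the same nonnegative operator. I would first recall the standard fact that any two bounded operators $A, B$ with $AA^{*} = BB^{*}$ differ by a partial isometry: there exists an operator $W$ with $A = BW$ and $W$ a partial isometry from $\overline{\mathrm{ran}\,A^{*}}$ onto $\overline{\mathrm{ran}\,B^{*}}$ (this is the uniqueness part of the polar decomposition). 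Applying this with $A = T$, $B = K^{1/2}$, we get $T = K^{1/2}W$ with $W^{*}W$ and $WW^{*}$ being orthogonal projections. Consequently
\[
T\Gamma T^{*} = K^{1/2}W\,\Gamma\,W^{*}K^{1/2},
\]
so it suffices to compare the eigenvalues of $K^{1/2}(W\Gamma W^{*})K^{1/2}$ with those of $K^{1/2}\Gamma K^{1/2}$.

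Next I would reduce the comparison to the nonzero spectrum of products. For any compact nonnegative operators, $CDC^{*}$ and $D^{1/2}C^{*}CD^{1/2}$ (equivalently $C^{*}CD$ on the appropriate subspace) share the same nonzero eigenvalues with multiplicity; more simply, for bounded $P, S$ the operators $PS$ and $SP$ have the same nonzero spectrum. Using this, the nonzero eigenvalues of $K^{1/2}W\Gamma W^{*}K^{1/2}$ coincide with those of $W\Gamma W^{*}K$, and the nonzero eigenvalues of $K^{1/2}\Gamma K^{1/2}$ coincide with those of $\Gamma K$. Thus the claim amounts to showing that $W\Gamma W^{*}K$ and $\Gamma K$ have eigenvalue sequences of the same decay order. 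Since $W^{*}W$ is a projection and $W$ is an isometry on its initial space (which contains $\overline{\mathrm{ran}\, T^{*}} = \overline{\mathrm{ran}\, K^{1/2}}$, because $T$ and $K^{1/2}$ have the same range closure), the operator $\beta \mapsto W^{*}K\beta$ behaves like $K^{1/2}$ composed with an isometry, and one obtains the two-sided bound via Weyl's inequality / the min-max characterization of singular values: $s_{k}(W\Gamma W^{*}K) \le \|W\|^{2}\, s_{k}(\Gamma K)$ and symmetrically $s_{k}(\Gamma K)\le \|W^{-1}\|^{2} s_{k}(W\Gamma W^{*}K)$ on the relevant subspace, where $\|W\| = \|W^{*}\| = 1$. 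This gives $s_{k}(T\Gamma T^{*}) \asymp s_{k}(K^{1/2}\Gamma K^{1/2})$.

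The main obstacle I anticipate is bookkeeping around the kernels and ranges: $W$ is only a partial isometry, so one must be careful that the null spaces of $T^{*}$, $K^{1/2}$, and $\Gamma$ do not spuriously inflate or deflate the eigenvalue count, and that the ``same nonzero spectrum'' lemmas are applied on the correct invariant subspaces. A clean way to sidestep this is to note that $\mathrm{ran}(T^{*})$ and $\mathrm{ran}(K^{1/2})$ have the same closure (both equal $\overline{\mathrm{ran}\, K}$, the form domain of $K$), restrict all operators to that common Hilbert space $\mathcal{H}_0$, on which $W$ becomes a genuine unitary, and then the identity $T\Gamma T^{*}|_{\mathcal H_0} = (K^{1/2}|_{\mathcal H_0})\,W\Gamma W^{*}\,(K^{1/2}|_{\mathcal H_0})$ together with unitary invariance of the spectrum of $W\Gamma W^{*}$ versus $\Gamma$ finishes the argument. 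An alternative, more hands-on route — if one prefers to avoid abstract polar decomposition — is to write everything in the $\varphi_k$ basis of $K^{1/2}\Gamma K^{1/2}$, express $T = K^{1/2}W$ with $W$ an explicit orthogonal matrix in that basis, and invoke Ostrowski's theorem on congruence transformations to conclude the eigenvalues stay comparable; either way the conclusion $s_k(T\Gamma T^{*}) \asymp s_k(K^{1/2}\Gamma K^{1/2})$ follows.
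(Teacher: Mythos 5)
Your reduction $T=K^{1/2}W$ with $W$ a partial isometry (from $TT^{*}=K^{1/2}K^{1/2}$ and uniqueness in the polar decomposition) is correct, as is the identity $T\Gamma T^{*}=K^{1/2}\,W\Gamma W^{*}\,K^{1/2}$. The gap is the step where you pass from $K^{1/2}(W\Gamma W^{*})K^{1/2}$ to $K^{1/2}\Gamma K^{1/2}$. The claimed bound $s_{k}(W\Gamma W^{*}K)\le\|W\|^{2}s_{k}(\Gamma K)$ does not follow from Weyl's inequality or the min--max characterization: those results let you peel a bounded factor off the \emph{ends} of a product (as in $s_{k}(AB)\le\|A\|\,s_{k}(B)$), but here $W$ and $W^{*}$ sit \emph{between} $\Gamma$ and $K$, and removing them would require an operator inequality of the form $W\Gamma W^{*}\preceq c\,\Gamma$, which a partial isometry does not supply. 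The reverse bound is worse, since $W^{-1}$ need not exist ($W$ generally has a kernel; its inverse on the initial space is $W^{*}$, of norm one, which gives you nothing extra). The ``clean'' version at the end --- that $W\Gamma W^{*}$ is unitarily equivalent to $\Gamma$ on $\mathcal{H}_{0}$, hence sandwiching both by $K^{1/2}$ yields the same decay --- is a non sequitur: the eigenvalue decay of $K^{1/2}AK^{1/2}$ depends on how the eigenvectors of $A$ align with those of $K$, not only on the eigenvalues of $A$. A two-dimensional caricature: take $K=\Gamma=\mathrm{diag}(1,\epsilon)$ and $W$ the coordinate swap; then $T=K^{1/2}W$ satisfies $TT^{*}=K$, yet $K^{1/2}\Gamma K^{1/2}=\mathrm{diag}(1,\epsilon^{2})$ while $T\Gamma T^{*}=K^{1/2}W\Gamma W^{*}K^{1/2}=\mathrm{diag}(\epsilon,\epsilon)$. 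Stringing such blocks together gives infinite-dimensional examples with genuinely different decay, so no argument using only ``$W$ is a partial isometry and $W\Gamma W^{*}$ has the same spectrum as $\Gamma$'' can close the proof.

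For comparison, the paper's proof transports \emph{test functions} rather than operators: it applies the Courant--Fischer min--max principle to $K^{1/2}\Gamma K^{1/2}$ with the test subspace ${\cal D}_{k}$ spanned by solutions of $K^{1/2}f_{j}=T^{*}\varphi_{j}$ (with $\varphi_{j}$ the eigenfunctions of $T\Gamma T^{*}$), substitutes $K^{1/2}f=T^{*}g$ in the Rayleigh quotient, and bounds the result by $s_{k}$ times the ratio $\langle g,g\rangle/\langle T^{*}g,T^{*}g\rangle$, which it takes to be bounded by a constant independent of $k$. That uniform bound on matched quadratic forms is precisely the alignment information your argument lacks (and it is what rules out examples like the one above). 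To repair your proof you would need a two-sided comparison of $\langle\Gamma T^{*}g,T^{*}g\rangle$ and $\langle\Gamma K^{1/2}f,K^{1/2}f\rangle$ over corresponding subspaces, not merely equality of the spectra of $W\Gamma W^{*}$ and $\Gamma$.
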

The minimax lower bound for the excess prediction risk has been established
by \cite{cai_12}. Suppose the $kth$ eigenvalues of the linear operator
$K^{1/2}\Gamma K^{1/2}$ is of order $k^{-2r}$ for some constant
$0<r<\infty$, then 
\[
\lim_{a\rightarrow0}\lim_{n\rightarrow\infty}\inf_{\hat{\beta}}\sup_{\beta_{0}\in H(K)}\mathbb{P}\Big(\big\|\hat{\beta}-\beta_{0}\big\|_{\Gamma}\ge an^{-\frac{r}{2r+1}}\Big)=1.
\]
It turns out that the optimal separating rate $\rho_{n}$ for testing
differs from the optimal rate for the problem of prediction. Similar
situation arises in the setting of nonparametric regression. 

Consider a special case that the reproducing kernel $K$ is perfectly aligned
with $\Gamma$, i.e., $K(s,t)=\sum_{k=1}^{\infty}a_{k}^{2}\psi_{k}(t)\psi_{k}(s)$
and $\Gamma(t,s)=\sum_{k=1}^{\infty}\eta_{k}\psi_{k}(t)\psi_{k}(s)$.
In this case, it is easy to see that $K^{1/2}\Gamma K^{1/2}(t,s)=\sum_{k=1}^{\infty}\eta_{k}a_{k}^{2}\psi_{k}(t)\psi_{k}(s)$,
which indicates that $s_{k}=\eta_{k}a_{k}^{2}$. This special case
has been studied in \cite{hilgert_13}.

\subsection{Optimal adaptive test}

Now back to the generalized likelihood ratio test. Recall that the
test statistic $\tau_{n,\lambda}$ has an asymptotic normal distribution
with mean $\mu_{n}=tr(A_{n})$ and variance $\sigma_{n}^{2}=2\,tr(A_{n}^{2})$.
Concerning the distribution of the random function $X$, we shall
assume that 
\begin{description}
\item [{(A1).}] $X$ has a finite fourth moment, i.e., $\int_{0}^{1}E(X^{4})<\infty$
and 
\[
E\Big(\big\langle X,\psi_{k}\big\rangle^{4}\Big)\le C\Big(E\big\langle X,\psi_{k}\big\rangle^{2}\Big)^{2}~~~~\mbox{for }k\ge1,
\]
where $C>0$ is a constant and $\psi_{k}$'s are eigenfunctions of
$\Gamma$. \end{description}
\begin{theorem}
\label{thm:ttheory} Assume (A1) holds and $\epsilon_{i}$, $i=1,...,n$
are independent and identically distributed following $\mathcal{N}(0,\sigma^{2})$. Let $\{s_{k}:k\ge1\}$ be the
sorted eigenvalues of the linear operator $T\Gamma T^{*}$. 

(a). When $s_{k}\asymp k^{-2r}$ for some constant $r>1/2$. Choose
$
\lambda=cn^{-4r/(4r+1)}$,
for some $c>0$. Then $\mu_{n}$ and $\sigma_{n}^{2}$ are of order
$O_{p}(n^{2/(4r+1)})$, and for any sequence $c_{n}\rightarrow\infty$,
the power function of the generalized likelihood ratio test is asymptotically
one: 
\[
\inf_{\beta\in{\cal F}_{K,\Gamma}(c_{n}\cdot\rho_{n}):\|\beta\|_{\Gamma}\ge c_{n}n^{-2r/(4r+1)}}\mathbb{P}_{\beta}\Big(\frac{\tau_{n,\lambda}-\mu_{n}}{\sigma_{n}}>z_{\alpha}\Big)\longrightarrow1,
\]
where $z_{\alpha}$ is the upper $\alpha$ quantile of the standard
normal distribution and $\rho_{n}$ is given in (\ref{equ:rho}).

(b). Assume $s_{k}\asymp\exp(-2rk)$ for some constant $r>0$. Choose
$\lambda$ such that 
\[
\log\lambda^{-1}=O(\log n),\quad\quad\lambda^{-1}n^{-1}=O(1),\quad\quad and\quad\lambda=o(n^{-1/2}).
\]
 Then $\mu_{n}$ and $\sigma_{n}^{2}$ are of order $O_{p}\{\log n/(2r)\}$,
and for any sequence $\tilde{c}_{n}\rightarrow\infty$, 
\[
\inf_{\beta\in{\cal F}_{K,\Gamma}:\|\beta\|_{\Gamma}\ge\tilde{c}_{n}\{\log n/(2rn^{2})\}^{1/4}}\mathbb{P}_{\beta}\Big(\frac{\tau_{n,\lambda}-\mu_{n}}{\sigma_{n}}\ge z_{\alpha}\Big)\longrightarrow1.
\]
\end{theorem}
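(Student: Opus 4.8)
The plan is to prove Theorem~\ref{thm:ttheory} by combining the asymptotic expansion of $\tau_{n,\lambda}$ from Theorem~\ref{thm:test} with careful control of the mean and variance terms $\mu_n = tr(A_n)$ and $\sigma_n^2 = 2\,tr(A_n^2)$ under the two eigenvalue decay regimes. The key observation is that under $H_0$ the standardized statistic is asymptotically $\mathcal{N}(0,1)$, so the test rejects when $\tau_{n,\lambda} > \mu_n + z_\alpha \sigma_n$; under the alternative $\beta_0$ with $\|\beta_0\|_\Gamma$ separated from zero, Theorem~\ref{thm:test}(b) gives $\tau_{n,\lambda} = z^T A_n z + \frac{n}{2\sigma^2}\|\beta_0\|_{\hat\Gamma}^2 + O_p(n\lambda + n^{1/2}\lambda^{1/2} + n^{1/2}\|\beta_0\|_{\hat\Gamma})$. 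The power tends to one precisely when the signal term $\frac{n}{2\sigma^2}\|\beta_0\|_{\hat\Gamma}^2$ dominates the fluctuation $z^T A_n z - \mu_n$ (which is $O_p(\sigma_n)$), the threshold term $z_\alpha \sigma_n$, and the remainder. So the core of the proof is to verify (i) the orders of $\mu_n$ and $\sigma_n^2$, and (ii) that the chosen $\rho_n$ makes the signal term asymptotically larger than all competing terms when $\|\beta_0\|_\Gamma \ge c_n \rho_n$.

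First I would establish the orders of $\mu_n$ and $\sigma_n^2$. Using the explicit formula $tr(A_n) = \sum_{k=1}^\infty \frac{\hat\kappa_k(\lambda + \frac12 \hat\kappa_k)}{(\lambda+\hat\kappa_k)^2} + \frac m2$ together with the cited Lemma~3, which gives $tr(A_n) = O_p\big(\sum_k \frac{s_k}{\lambda + s_k}\big)$, I would plug in the decay rates. In case (a), $s_k \asymp k^{-2r}$ yields $\sum_k \frac{s_k}{\lambda+s_k} \asymp \sum_k \frac{1}{1 + \lambda k^{2r}} \asymp \lambda^{-1/(2r)}$, and with $\lambda = cn^{-4r/(4r+1)}$ this is $\asymp n^{2/(4r+1)}$, hence $\mu_n = O_p(n^{2/(4r+1)})$; an analogous computation (bounding $tr(A_n^2)$ by a comparable sum of squared quantities $\big(\frac{s_k}{\lambda+s_k}\big)^2$) gives $\sigma_n^2 = O_p(n^{2/(4r+1)})$, so $\sigma_n = O_p(n^{1/(4r+1)})$. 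In case (b), $s_k \asymp e^{-2rk}$ gives $\sum_k \frac{s_k}{\lambda+s_k} \asymp \#\{k : s_k \gtrsim \lambda\} \asymp \frac{1}{2r}\log\lambda^{-1} = O(\log n/(2r))$ under the stated conditions on $\lambda$, and similarly $\sigma_n^2 = O_p(\log n/(2r))$.

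Next I would compare the signal against the noise. The alternative radius is $\rho_n = n^{-2r/(4r+1)}$ in case (a), so for $\|\beta_0\|_\Gamma \ge c_n \rho_n$ we need to transfer from $\|\beta_0\|_\Gamma$ to $\|\beta_0\|_{\hat\Gamma}$: since $\hat\Gamma \to \Gamma$ in the appropriate operator sense (controlled via assumption (A1) and a law-of-large-numbers/concentration argument on $\langle X,\psi_k\rangle$), one shows $\|\beta_0\|_{\hat\Gamma}^2 = \|\beta_0\|_\Gamma^2 (1 + o_p(1))$ uniformly over the relevant class, so the signal term is $\gtrsim n c_n^2 \rho_n^2 = c_n^2 n^{1/(4r+1)}$. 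This must dominate: $z_\alpha \sigma_n = O_p(n^{1/(4r+1)})$ — same order, beaten by the factor $c_n^2 \to \infty$; $n\lambda = cn^{1/(4r+1)}$ — same order, again beaten; $n^{1/2}\lambda^{1/2} = O(n^{(2r+1)/(2(4r+1))})$ — one checks $\frac{2r+1}{2(4r+1)} < \frac{1}{4r+1}$ iff $2r+1 < 2$, which fails for $r > 1/2$... so I would instead verify the correct dominance by noting $n^{1/2}\|\beta_0\|_{\hat\Gamma} \gtrsim n^{1/2} c_n \rho_n = c_n n^{1/(2(4r+1))}$, which is of strictly smaller order than the signal $c_n^2 n^{1/(4r+1)}$, and similarly bound $n^{1/2}\lambda^{1/2}$ against the signal by a direct exponent comparison after substituting $\lambda$'s order. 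The fluctuation $z^T A_n z - \mu_n = O_p(\sigma_n) = O_p(n^{1/(4r+1)})$ is likewise dominated. Case (b) follows the same template with $\rho_n^2 = \{\log n/(2rn^2)\}^{1/2}$, so the signal is $\gtrsim n \tilde c_n^2 \rho_n^2 = \tilde c_n^2 \{n^2 \log n/(2r)\}^{1/2}/n = \tilde c_n^2 \{\log n/(2r)\}^{1/2}$, which dominates $z_\alpha \sigma_n = O_p(\{\log n/(2r)\}^{1/2})$ by the factor $\tilde c_n^2$, while $n\lambda = O(1)$ and $n^{1/2}\lambda^{1/2} = o(n^{1/4})$ are lower order, and $n^{1/2}\|\beta_0\|_{\hat\Gamma} = O_p(\tilde c_n n^{1/2}\rho_n) = O_p(\tilde c_n (\log n)^{1/4})$ is also lower order than the signal.

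The main obstacle I anticipate is making the transfer $\|\beta_0\|_{\hat\Gamma}^2 = \|\beta_0\|_\Gamma^2(1+o_p(1))$ rigorous and \emph{uniform} over $\beta_0$ in the alternative class, since $\beta_0$ is infinite-dimensional and $\hat\Gamma - \Gamma$ must be controlled not just in, say, Hilbert–Schmidt norm but in a way that matches the smoothness of $\beta_0$ encoded by $\beta_0 \in \mathcal{H}(K)$; this is where assumption (A1) does the work, via a fourth-moment bound giving $E|\langle X,\psi_k\rangle|^4 \le C(E\langle X,\psi_k\rangle^2)^2$ to control $\mathrm{Var}(\langle\hat\Gamma\beta_0,\beta_0\rangle)$ coordinate-wise. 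A secondary technical point is verifying that the $O_p$ remainder in Theorem~\ref{thm:test}(b), and the orders of $\mu_n,\sigma_n$ coming from Lemma~3, hold \emph{uniformly} over the alternative class rather than pointwise, which requires the constants in those lemmas to depend only on $\Gamma$ and $K$ and not on $\beta_0$; granting that, the exponent bookkeeping above closes the argument.
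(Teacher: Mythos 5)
Your proposal follows essentially the same route as the paper's proof: expand $\tau_{n,\lambda}$ under the alternative via Theorem~\ref{thm:test}(b), obtain the orders of $\mu_n$ and $\sigma_n^2$ from Lemma~3 under each eigenvalue-decay regime, and check that the signal term $\frac{n}{2\sigma^{2}}\|\beta_0\|_{\hat\Gamma}^{2}\gtrsim c_n^2\, n\rho_n^2$ dominates $z_\alpha\sigma_n$ and the remainder terms. One arithmetic correction: with $\lambda\asymp n^{-4r/(4r+1)}$ the term $n^{1/2}\lambda^{1/2}$ is of order $n^{1/(2(4r+1))}$, not $n^{(2r+1)/(2(4r+1))}$, so it is directly dominated by the signal $c_n^2 n^{1/(4r+1)}$ and the detour you take at that point is unnecessary; your remark that the transfer from $\|\beta_0\|_{\Gamma}$ to $\|\beta_0\|_{\hat\Gamma}$ must be made uniform over the alternative class is a genuine point that the paper's own proof passes over silently.
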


The optimal smoothing parameters for prediction and testing are different.
When $\kappa_{k}\asymp k^{-2r}$, if we choose $\lambda=\tilde{\lambda}$
to be of order $n^{-2r/(2r+1)}$, which is the optimal order for prediction,
the rate of the testing will be slower than the optimal rate given
in Theorem \ref{thm:lower}. Specifically, there exists a $\beta\in{\cal F}_{K,\Gamma}$
satisfying $\|\beta\|_{\Gamma}=n^{-(r+d)/(2r+1)}$ with $d>1/8$ such
that the power function of the test at the point $\beta$ is bounded
by $\alpha$, namely 
\[
\limsup_{n\rightarrow\infty}\mathbb{P}_{\beta}\Big(\tau_{n,\tilde{\lambda}}>\mu_{n}+z_{\alpha}\sigma_{n}\Big)\le\alpha.
\]
As we see in part (b), when $s_{k}$ is exponentially decayed, the
choice of $\lambda$ is more flexible. For example, any $n^{d}$ for
$-1\leq d<-\frac{1}{2}$ , could guarantee an optimal test. 

Considering $\lambda^{*}$ such that 
\[
\lambda^{*}=\arg\min_{\lambda\ge0}\Big(\lambda+\frac{1}{n}\sum_{k=1}^{\infty}\frac{\kappa_{k}}{\sqrt{\lambda}+\kappa_{k}}\Big),
\]
where $\kappa_{k}$'s are eigenvalues of $Q=T_{0}^{m}\Gamma T_{1}^{m}$.
$\lambda^{*}$ is well-defined, since 
\begin{align*}
\sum_{k=1}^{\infty}\kappa_{k} & =\int_{0}^{1}Q(t,t)dt=E\big\langle T_{0}^{m}X,T_{1}^{m}X\big\rangle\le C_{1}\int_{0}^{1}E(X^{2})<\infty.
\end{align*}
It is not hard to see that $\lambda^{*}\asymp n^{-4r/(4r+1)}$ if
$\kappa_{k}\asymp k^{-2r}$, while $\lambda^{*}\asymp n^{-1}$ if
$\kappa_{k}\asymp e^{-2rk}$. Therefore an estimated $\lambda^{*}$
can be used as our choice of the smoothing parameter. It is natural
to use $\tilde{Q}=T_{0}^{m}\hat{\Gamma}T_{1}^{m}$ as an estimate
of $Q$. The following Theorem gives an adaptive estimation of $\lambda$.
\begin{theorem}
\label{thm:adaptive} Assume (A1) holds. Denote by $\tilde{\kappa}_{1}\ge\tilde{\kappa}_{2}\ge\cdots\ge0$
the eigenvalues of $\tilde{Q}$. Choosing $\tilde{\lambda}$ as 
\begin{equation}
\tilde{\lambda}=\arg\min_{\lambda\ge0}\Big(\lambda+\frac{1}{n}\sum_{k=1}^{\infty}\frac{\tilde{\kappa}_{k}}{\sqrt{\lambda}+\tilde{\kappa}_{k}}\Big).\label{eq:lambda tilde}
\end{equation}
When $s_{k}\asymp k^{-2r}$ for some constant $r>1/2$, there exist
constants $0<c_{1}<c_{2}<\infty$ such that 
\[
\lim_{n\rightarrow\infty}\mathbb{P}\Big(c_{1}<\frac{\tilde{\lambda}}{\lambda_{o}}<c_{2}\Big)=1
\]
where $\lambda_{o}=cn^{-4r/(4r+1)}$ for some $c>0$.
\end{theorem}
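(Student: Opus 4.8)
The plan is to compare the empirical criterion $\tilde g(\lambda)=\lambda+\tfrac1n\sum_k\frac{\tilde\kappa_k}{\sqrt\lambda+\tilde\kappa_k}$ defining $\tilde\lambda$ with its population analogue $g(\lambda)=\lambda+\tfrac1n\sum_k\frac{\kappa_k}{\sqrt\lambda+\kappa_k}$, where $\kappa_k$ are the eigenvalues of $Q=T_0^m\Gamma T_1^m$, and then to run a standard $\arg\min$-consistency argument. The first ingredient is purely deterministic. By the spectral identifications recorded in Section~\ref{sec:upper} together with Proposition~\ref{prop:kt}, the hypothesis $s_k\asymp k^{-2r}$ forces $\kappa_k\asymp k^{-2r}$, and then $g(\lambda)\asymp \lambda+\tfrac1n\lambda^{-1/(4r)}=:\bar g(\lambda)$ with hidden constants depending only on the $\asymp$-constants for $\kappa_k$. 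A one-line calculus check shows $\bar g$ is strictly U-shaped with minimiser of order $n^{-4r/(4r+1)}=\lambda_o$, and, crucially, that its minimum is \emph{well separated}: writing $\lambda=\theta\lambda_o$ one gets $\bar g(\theta\lambda_o)/\bar g(\lambda_o)\asymp(\theta+\theta^{-1/(4r)}\cdot 4r)/(1+4r)$, which exceeds $1+c$ as soon as $\theta\notin(1/M,M)$ for a fixed $M$. Hence $g$ enjoys the same separation; this is the quantitative upgrade of the remark $\lambda^*\asymp\lambda_o$ quoted before the theorem.

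The second, probabilistic ingredient is that $\tilde g$ is uniformly close to $g$ on a polynomial neighbourhood of $\lambda_o$. Under (A1), $\|\hat\Gamma-\Gamma\|_{HS}=O_p(n^{-1/2})$, so $\|\tilde Q-Q\|_{op}\le\|\tilde Q-Q\|_{HS}=O_p(n^{-1/2})=:\delta_n$ since $T_0^m,T_1^m$ are bounded. Weyl's inequality gives $|\tilde\kappa_k-\kappa_k|\le\delta_n$ for every $k$, and the law of large numbers (finite variance by (A1)) gives $\operatorname{tr}\tilde Q=\operatorname{tr}Q+O_p(n^{-1/2})$. I would first reduce both sums to eigenvalue counting functions: $\sum_k\frac{\kappa_k}{\sqrt\lambda+\kappa_k}\asymp N(\sqrt\lambda):=\#\{k:\kappa_k>\sqrt\lambda\}$, and likewise $\sum_k\frac{\tilde\kappa_k}{\sqrt\lambda+\tilde\kappa_k}\asymp\tilde N(\sqrt\lambda)$, where the small-eigenvalue tail is absorbed via $\sum_{k>M}\tilde\kappa_k\le\operatorname{tr}\tilde Q-\sum_{k\le M}\kappa_k+M\delta_n$. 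Since $\sqrt\lambda\gg n^{-1/2}\gg\delta_n$ on the range of interest, Weyl sandwiches $\tilde N(\sqrt\lambda)$ between $N(\sqrt\lambda+\delta_n)$ and $N(\sqrt\lambda-\delta_n)$, and the regular variation of $N$ with index $-1/(2r)$ yields $\tilde N(\sqrt\lambda)\asymp N(\sqrt\lambda)\asymp\lambda^{-1/(4r)}$ with probability tending to one. Consequently $\tilde g(\lambda)\asymp\bar g(\lambda)$, uniformly over $\lambda\in[n^{-1+\varepsilon},n^{\varepsilon}]$ for a small fixed $\varepsilon$, on an event of probability $\to1$.

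To conclude, I would confine $\tilde\lambda$ to this range and invoke the separation. Upper confinement is immediate: $\tilde\lambda\le\tilde g(\tilde\lambda)\le\tilde g(\lambda_o)\lesssim\lambda_o$ w.h.p., so $\tilde\lambda$ cannot exceed $n^{\varepsilon}$. For lower confinement, if $\lambda\le n^{-1+\varepsilon}$ then $\tilde g(\lambda)\ge\tfrac1n\tilde N(\sqrt\lambda)\ge\tfrac1n\#\{k:\tilde\kappa_k>n^{-1/2+\varepsilon/2}\}\gtrsim n^{(1-\varepsilon)/(4r)-1}$, which for $\varepsilon$ small enough depending on $r$ is of strictly larger order than $\lambda_o\gtrsim\tilde g(\lambda_o)$; hence no such $\lambda$ is a minimiser. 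On the event where $\tilde g\asymp\bar g$ on $[n^{-1+\varepsilon},n^{\varepsilon}]$ and $\tilde\lambda$ lies there, $\bar g(\tilde\lambda)\lesssim\tilde g(\tilde\lambda)\le\tilde g(\lambda_o)\lesssim\bar g(\lambda_o)$, and the U-shape of $\bar g$ forces $c_1<\tilde\lambda/\lambda_o<c_2$ for constants depending only on $r$ and the model constants, which is the claim. The main obstacle is the middle step: obtaining $\tilde g\asymp g$ \emph{uniformly in} $\lambda$ while only controlling the eigenvalue perturbation in $\ell^2$ (Hilbert–Schmidt / operator norm) rather than $\ell^1$ — this is exactly what forces the detour through the counting functions and the trace-class concentration of $\operatorname{tr}\tilde Q$ to handle the infinite tail of the spectrum.
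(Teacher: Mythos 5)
Your proposal is correct, and it reaches the conclusion by a genuinely different route from the paper's. The paper does not compare the empirical and population objective functions directly: it imports the relative eigenvalue perturbation bounds of Hall and Horowitz (their equation (5.7)), which control $|\tilde{\kappa}_j-\kappa_j|$ through $\tilde{\Delta}_{jj}$ and $\delta_j^{-1}\tilde{\Delta}(\tilde{\Delta}+\tilde{\Delta}_j)$ with $\mathbb{E}\tilde{\Delta}_{jj}^2\le C n^{-1}\kappa_j^2$, sums these over $j\le\varrho$ with $\varrho\asymp n^{1/(4r+1)}$, handles the infinite tail by the same trace-difference trick you describe, restricts to the event $\{\tilde{\kappa}_k\ge\kappa_k/2,\ k\le\varrho\}$ (obtained from exactly your Weyl-type bound $\sup_k|\tilde{\kappa}_k-\kappa_k|\le\tilde{\Delta}$), and then characterizes $\tilde{\lambda}$ through the stationarity equation $n^{-1}\sum_k\tilde{\kappa}_k(\sqrt{\lambda}+\tilde{\kappa}_k)^{-2}=2\sqrt{\lambda}$, bounding that sum above and below to pin down the order of the root. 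Your argument replaces the Hall--Horowitz machinery by the cruder operator-norm perturbation plus the eigenvalue counting function, and replaces the first-order condition by a direct argmin-consistency/separation argument on the objective itself. What your route buys is self-containedness and robustness: you never need per-eigenvalue relative accuracy, you avoid treating the minimizer as an interior stationary point, and the well-separated U-shape of $\bar g$ cleanly converts the two-sided comparison $\tilde g\asymp\bar g$ into $c_1<\tilde{\lambda}/\lambda_o<c_2$; the price is the extra bookkeeping you correctly identify (confinement to $[n^{-1+\varepsilon},n^{\varepsilon}]$ and the trace concentration to tame the infinite tail). One point to make explicit in a full write-up: the constants $c_1,c_2$ must not degenerate as the exceptional probability shrinks; this works because once $\delta_n=o(\sqrt{\lambda})$ on the confinement range the implied constants in $\tilde N(\sqrt{\lambda})\asymp N(\sqrt{\lambda})$ stabilize, so only the sample-size threshold, not $c_1,c_2$, depends on the prescribed probability level.
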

Theorem $\ref{thm:adaptive}$ verifies that $\tilde{\lambda}$ chosen
by ($\ref{eq:lambda tilde}$) is of the proper order. Simulations
also show that as long as $X(s)$ and $Y$ are at a proper scale, say
ranging at the level of $[-10,10]$, we can directly use the $\tilde{\lambda}$
without worrying about multiplying a constant. However we need to
be more careful when $X$ and $Y$ are numerically at a different
scale. As for the case when $\kappa_{k}$ is exponentially decayed,
the proper $\lambda$ has a much larger range. We can still use ($\ref{eq:lambda tilde}$)
to get a proper $\lambda$.

\section{Numerical Studies \label{sec:numerical}}

\subsection{Simulation\label{sec:Simulation}}

Consider the case that slope function $\beta(t)$ is in the Soblev
space $W_{2}^{2}$. The penalty function in (\ref{equ:obj0}) becomes
$\lambda\int_{0}^{1}\beta^{''}(s)^{2}ds.$ Following a similar setup
as that in Yuan and Cai (2010), we generate the covariate function
$X\text{(t)}$ by:
$X(t)=\sum_{k=1}^{50}\zeta_{k}Z_{k}\phi_{k}(t)$.
where $Z_{k}$'s are independently sampled from $Unif[-\sqrt{3},\sqrt{3}]$
and $\phi_{k}$'s are Fourier basis with $\phi_{1}=1$ and $\phi_{k+1}(t)=\sqrt{2}\cos(k\pi t)$
for $k\geq1$.  We have two settings for $\zeta_{k}$. For setup
1, let $\zeta_{k}=(-1)^{k+1}k^{-v/2}/||\zeta||$,
where $\zeta=(\zeta_{1},...,\zeta_{50})^T$ and $||\cdot||$ indicates
$\mathcal{L}_{2}$ norm. The normalizing term $||\zeta||^{-1}$ is
added to rule out the potential effect from the magnitude of $X(s)$.
For setup 2, $\zeta$ is chosen as 
\begin{equation*}
\zeta_{k}=\begin{cases}
\begin{array}{c}
1\\
0.2(-1)^{k+1}(1-0.0001k)\\
0.2(-1)^{k+1}[5(k/5)]^{-v/5}-0.0001(k\,mod\,5)
\end{array} & \begin{array}{c}
k=1\\
2\leq k\leq4\\
k\geq5
\end{array}\end{cases}.
\end{equation*}
The eigenvalues of the covariance function of $X(t)$
are $\zeta_{k}^{2}$'s, the decay rate of which is determined by $\nu$.
In both cases, let $\nu=1.1,\,1.5,\,2,\,4.$ With the same basis,
the true slope function $\beta_{0}$ is generated as: 
$
\beta_{0}=B\cdot\sum_{i=1}^{50}(-1)^{k+1}k^{-2}\phi_{k},
$
where B is a constant to control the norm of $\beta_{0}$. For both
setups, a set of B ranging from 0 to 1 is examined.
Response $Y$ is generated through the functional regression model
with $\varepsilon\sim N(0,\,1)$. Sample
size $n=50,\,100,\,200$ are adopted to appreciate the effect of sample
size. 

For each simulated dataset, smoothing parameter $\lambda$ is chosen
based on ($\ref{eq:lambda tilde}$), $\hat{\beta}(t)$ is estimated
by (\ref{equ:beta_hat}), and the testing statistic $\tau_{n,\lambda}$
is calculated as shown in ($\ref{equ:tau}$). According to Theorem
\ref{thm:test}, we reject $H_{0}$ if $|\frac{\tau_{n,\lambda}-\mu_{n}}{\sigma_{n}}|>z_{\alpha/2}$,
with $\alpha=0.05$. To estimate the size and power of our testing
procedure, each setting is repeated 1000 times to get the percentage
of rejecting $H_{0}$.

\begin{table}
\begin{centering}
\begin{tabular}{c|ccc}
\hline 
 & n=50  & n=100  & n=200 \tabularnewline
\hline 
$\nu$=1.1  & 0.066 & 0.058 & 0.059\tabularnewline
\hline 
$\nu$=1.5  & 0.048 & 0.055 & 0.041\tabularnewline
\hline 
$\nu$=2  & 0.051 & 0.055 & 0.045\tabularnewline
\hline 
$\nu$=4  & 0.067 & 0.053 & 0.041\tabularnewline
\hline 
\end{tabular}
\quad\quad
\begin{tabular}{c|ccc}
\hline 
 & n=50  & n=100  & n=200 \tabularnewline
\hline 
$\nu$=1.1  & 0.065 & 0.052 & 0.054\tabularnewline
\hline 
$\nu$=1.5  & 0.063 & 0.046 & 0.057\tabularnewline
\hline 
$\nu$=2  & 0.059 & 0.051 & 0.051\tabularnewline
\hline 
$\nu$=4  & 0.065 & 0.044 & 0.050\tabularnewline
\hline 
\end{tabular}
\par\end{centering}
\caption{Size of the test. Left: setup 1. Right: setup2.}
\label{tab:setup1 size}

\end{table}

Table \ref{tab:setup1 size} shows the size of the test
under different decay rate $\nu$ and sample size $n$ for both setups.
The size of the test stays closer around 0.05. 

Under alternative hypothesis $H_{1}:\beta_{0}\in F_{K,\Gamma}(\rho_{n})$,
the power function of test under different decay rate $\nu$ and sample
size $n$ are shown in Figure \ref{fig:setup1 pf}. It is very clear
that as $B$ increases, $||\beta_{0}||_{\Gamma}$ increases, and therefore
the power of the test increases to 1. Also as expected, under the same
setting, when sample size $n$ goes up, the power should increase,
which 
manifests a steeper slope of the power function in the figure.
What is more interesting in the figure, is how the power is affected
by the decay rate of the eigenvalues of $T_{0}^{m}\Gamma T_{1}^{m}$,
which in our setting is determined by $\nu$. As shown in the figure,
power function with $\nu=4$ always lies on top while that with $\nu=1.1$
always stays the lowest, which perfectly matches Theorem \ref{thm:lower} that the
larger the $\nu$, the faster the decay rate, and therefore the more
powerful the test.

\begin{figure}[h!]
\center
\includegraphics[width=5.6in]{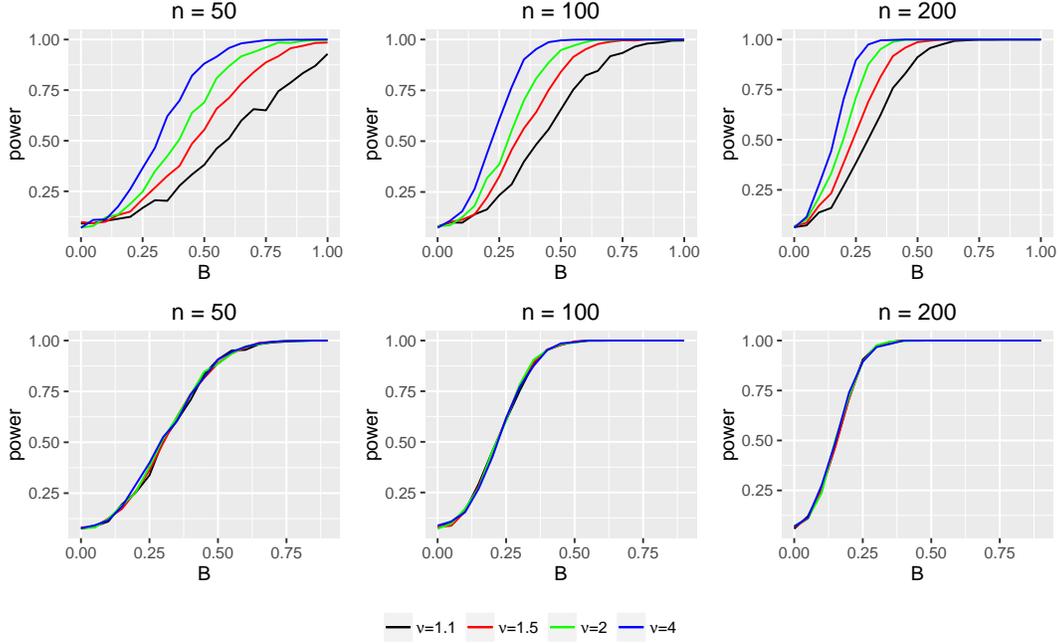}
\caption{Power function of the test for n=50, 100, 200. First row: under setup 1; second row: under setup 2}
\label{fig:setup1 pf} 
\end{figure}

Similarly for setup 2,
the power of the test goes up when
sample size $n$ and $||\beta_{0}||_{\Gamma}$ increase. However the effect
of the decay rate $\nu$ can be hardly seen this time. The reason is that
when choosing $\zeta$ we did not normalize it as we did in setup1.
Therefore even though a larger $\nu$ could lead to a more powerful test,
the magnitude of $X(s)$ is significantly decreased due to the faster
decay rate, and this counter balanced the effect of $\nu$. 


%
%

\subsection{California air quality data}

Back to the California air quality example, as mentioned in the introduction,
we are interest in testing the effect of trajectories of oxides of
nitrogen ($\text{NO}_{\text{x}}$) on the level of ground-level concentrations of ozone
($\text{O}_3$). Data we are using is from the database of California Air Quality
Data. $\text{NO}_{\text{x}}$ levels and $\text{O}_3$ levels of city Sacramento
are recorded from June 1 to August 31 in 2015. There are 91 days on the
record, and 3 days are removed due to severe missing data. For the
rest 89 days, levels of $\text{NO}_{\text{x}}$ are observed at each hour except for 4am
and average $\text{O}_3$ level can also be obtained through the recorded data.
The left panel of Figure \ref{fig:oz} displays the daily trajectories
of $\text{NO}_{\text{x}}$ levels, and the middle panel shows the average $\text{O}_3$ level each
day during the same time period. When applying the proposed testing procedure, every record is rescaled by multiplying
100 due to the small magnitude.

Let $X_{i}(s),\,i=1,...,89$ denote the daily trajectories of $\text{NO}_{\text{x}}$
levels after pre-smoothing and centering, and rescale $s$ so that
$s\in[0,1]$. In the introduction, two types of response variables
are considered, the average $\text{O}_3$ level of the same day as the $\text{NO}_{\text{x}}$
level, and the average $\text{O}_3$ level five days later after the recorded
$\text{NO}_{\text{x}}$ trajectory. More generally we can examine the relation between
the $\text{O}_3$ level of a certain day and the $\text{NO}_{\text{x}}$ level $d$ days before that day. If we
take $Y_{i},\ i=1,...,89$ as the corresponding $\text{O}_3$ level of the day
when $X_{i}$ is recorded. Then the regression function is written
as 
\[
Y_{i+d}=\int_{0}^{1}X_{i}(s)\beta(s)ds+\epsilon_{i},
\]
for a fixed $d$. 

We go through the proposed testing procedure for $d=0,1,...,5$ and
all the p-value are listed in Table \ref{tab: p-value}. We can
see that for $d$ up to 4, the test returns a significant result at
level $\alpha=0.05$, which indicates that daily $\text{NO}_{\text{x}}$ level is
significantly related to the $\text{O}_3$ level up to four days later. It is also
interesting to see that the smallest p-value occurs at $d=1$. A possible
way to interpret it is that instead of the current $\text{NO}_{\text{x}}$ level, the average $\text{O}_3$ level
depends more on the $\text{NO}_{\text{x}}$ level the day before. That is to say
there is a delayed effect of $\text{NO}_{\text{x}}$ level on $\text{O}_3$ level. 

\begin{table}
\caption{P-value}
\centering
\begin{tabular}{c|ccccccc}
\hline \hline
d & 0 & 1 & 2 & 3 & 4 & 5 & 6\tabularnewline
\hline 
p-value & 3.07e-5 & 6.78e-9 & 2.30e-5 & 3.13e-4 & 0.0031 & 0.36 & 0.70\tabularnewline
\hline \hline
\end{tabular}

\label{tab: p-value}
\end{table}

\section{Discussion}

We have so far focused on the case with continuously observed functional predictors. If we have densely observed functional predictors, our framework can be applied similarly. An interesting extension of the current work would be to study the case when having sparsely observed functional predictors with/without measurement error. The ideas of \cite{yao_05b} can be applied. A common strategy is to first have a pre-smoothing step and then apply our methodology. How the number of sparse observations affects the power of the test is beyond the scope of this paper and will be explored in future works. 

A continuation of this paper is to study the optimal testing for the generalized functional linear model with a scalar response and
a functional predictor \citep{du_14}. 
Given the functional predictor, the response is assumed to follow some distribution from the exponential family.  The main difficulty is that the characterization conditions of the slope estimator becomes complex and nontrivial. This problem hinders further studies in the asymptotic properties. We conjecture that the generalized likelihood ratio test will achieve the optimal rate of testing and the optimal rate still depends on the decay rate of $K^{1/2}\Gamma K^{1/2}$. This issue will be addressed in detail in the future.

%

\section{Proofs of Theorems\label{sec:Proofs-of-Theorems}}


\subsection{Proof of Theorem \ref{thm:beta}}

We prove this theorem using the calculus of variation. Denote 
\[
L(\beta)=\frac{1}{n}\sum_{i=1}^{n}\Big\{ Y_{i}-\int_{0}^{1}X_{i}(s)\beta(s)ds\Big\}^{2}+\lambda\int_{0}^{1}\Big\{\beta^{(m)}(s)\Big\}^{2}ds.
\]
For any $\beta,\beta_{1}\in W_{2}^{m}$ and $\delta\in\mathbb{R}$,
\begin{equation}
L(\beta+\delta\beta_{1})-L(\beta)=2\delta L_{1}(\beta,\beta_{1})+O(\delta^{2}),\label{equ:temp1}
\end{equation}
where 
\begin{eqnarray}
L_{1}(\beta,\beta_{1}) & = & -\frac{1}{n}\sum_{i=1}^{n}\big\{ Y_{i}-\int_{0}^{1}X_{i}(s)\beta(s)ds\big\}\big\{\int_{0}^{1}X_{i}(s)\beta_{1}(s)ds\big\}\label{equ:psi1}\\
 &  & ~~~~~~~~~~~+\lambda\int_{0}^{1}\beta^{(m)}(s)\beta_{1}^{(m)}(s)ds.\nonumber 
\end{eqnarray}

By Lemma~1,
if $L_{1}(\beta,\beta_{1})=0$ for all $\beta_{1}\in W_{2}^{m}$,
letting $\mathcal{I}_{1}=\{t\in[0,1]:L_{2}(\beta)\neq0\}$ and $\beta_{1}^{(m)}(t)=-I_{\mathcal{I}_{1}}(t)$
gives 
\[
L_{1}(\beta,\beta_{1})=\int_{\mathcal{I}_{1}}L_{2}(\beta)dt\neq0,
\]
unless $\mathcal{I}_{1}$ is of measure zero. This shows $L_{2}(\beta)=0$
a.e.. This complete the proof of the first part of the theorem.

If $\hat{\beta}$ is the optimal solution, we have 
\[
\hat{\beta}^{(m)}=\frac{(-1)^{m}}{n}\hat{Q}^{+}\hat{U}^{T}{\bf Y}.
\]
It follows from (\ref{equ:beta11}) that 
\[
\hat{H}\widehat{\Upsilon}(1)+\frac{(-1)^{m}}{n}\widetilde{X}(1)\int_{0}^{1}T_{0}^{m}{\bf X}(s)\hat{\beta}^{(m)}(s)ds=\frac{1}{n}\widetilde{X}(1){\bf Y}.
\]
Therefore, the second part of the theorem follows from these two facts.


\subsection{Proof of Theorem \ref{thm:test}}

For part (a), under $H_{0}$ with $\beta_{0}\equiv0$, we have 
\begin{align*}
\frac{1}{n}\mbox{RSS}_{0} & =\frac{1}{n}\epsilon^{T}\epsilon,\\
\frac{1}{n}\mbox{RSS}_{1} & =\frac{1}{n}\epsilon^{T}\epsilon+\|\hat{\beta}-\beta_{0}\|_{\hat{\Gamma}}^{2}-\frac{2}{n}\epsilon^{T}\int(\hat{\beta}-\beta_{0}){\bf X}.
\end{align*}
It follows from Lemma~2
that, 
\begin{align*}
\frac{1}{n}\mbox{RSS}_{1} & -\frac{1}{n}\mbox{RSS}_{0}\\
= & \|\hat{\beta}-\beta_{0}\|_{\hat{\Gamma}}^{2}-\frac{2}{n}\epsilon^{T}\int(\hat{\beta}-\beta_{0}){\bf X}\\
= & \frac{1}{n^{2}}\epsilon^{T}\Big\{\int_{0}^{1}\int_{0}^{1}\hat{Q}^{+}\hat{U}(t)\hat{Q}(t,s)\hat{Q}^{+}\hat{U}(s)^{T}dtds-2\int_{0}^{1}\hat{U}(t)\hat{Q}^{+}\hat{U}(t)^{T}dt\Big\}\epsilon\\
 & -\frac{1}{n^{2}}\epsilon^{T}\widetilde{X}(1)^{T}\hat{H}^{-1}\widetilde{X}(1)\epsilon\\
= & -\frac{2}{n}\epsilon^{T}A_{n}\epsilon=o_{p}(n^{-1/2}),
\end{align*}
provided that $tr(A_{n}^{2})=o(n)$. Hence, with the fact that under $H_{0}$,
$\sigma^{2}=\mbox{RSS}_{0}/n+O_{p}(n^{-1/2})$, the likelihood ratio
test statistic $\tau_{n,\lambda}$ becomes
\begin{align*}
\tau_{n,\lambda} & =-\frac{n}{2}\log\frac{\mbox{RSS}_{1}/n}{\mbox{RSS}_{0}/n}=-\frac{n}{2\sigma^{2}}\Big(\frac{1}{n}\mbox{RSS}_{1}-\frac{1}{n}\mbox{RSS}_{0}\Big)\big(1+o_{p}(n^{-\frac{1}{2}})\big)\\
 & =z^{T}A_{n}z+o_{p}(1),
\end{align*}
where $z=\epsilon/\sigma$. 

To show that $\tau_{n,\lambda}$ has an asymptotic normal distribution
with mean $\mu_{n}=tr(A_{n})$ and variance $\sigma_{n}^{2}=2tr(A_{n}^{2})$,
we need to show that
\[
Trace(A_{n}^{4})/\sigma_{n}^{4}\rightarrow0.
\]
Let
\[
A_{I}=\frac{1}{n}\int_{0}^{1}\hat{U}(t)\hat{Q}^{+}\hat{U}(t)^{T}dt-\frac{1}{2n}\int_{0}^{1}\int_{0}^{1}\hat{Q}^{+}\hat{U}(t)\hat{Q}(t,s)\hat{Q}^{+}\hat{U}(s)^{T}dtds,
\]
and 
\[
A_{II}=\frac{1}{2}B.
\]
So $tr(A)=tr(A_{I})+tr(A_{II})$. Noting that $tr(A_{II})=m/2$, $tr(A)$
is of the same order as $tr(A_{I})$. Recall that $\hat{Q}(t,s)=\sum_{j=1}^{\infty}\hat{\kappa}_{j}\hat{\phi}_{j}(t)\hat{\phi}_{j}(s)$.
and $\hat{U}_{X_{i}}(t)=\sum_{k=1}^{\infty}\hat{\xi}_{ik}\hat{\phi}_{k}(t)$
with $n^{-1}\sum_{i=1}^{n}\hat{\xi}_{ik}^{2}=\hat{\kappa}_{k}$ and
$n^{-1}\sum_{i=1}^{n}\hat{\xi}_{ik}\hat{\xi}_{ij}=0$ for $k\neq j$.
Therefore 
\[
(A_{I})_{ij}=\frac{1}{n}\sum_{k=1}^{\infty}\frac{(2\lambda+\hat{\kappa}_{k})\hat{\xi}_{ik}\hat{\xi}_{jk}}{2(\lambda+\hat{\kappa}_{k})^{2}}.
\]
Further 
\begin{align*}
tr(A_{I}) & =\sum_{k=1}^{\infty}\frac{\hat{\kappa}_{k}(2\lambda+\hat{\kappa}_{k})}{2(\lambda+\hat{\kappa}_{k})^{2}}\asymp\sum_{k=1}^{\infty}\frac{\hat{\kappa}_{k}}{\lambda+\hat{\kappa}_{k}}.
\end{align*}
Similarly, we can show that 
\[
(A_{I}^{2})_{ij}=\frac{1}{n}\sum_{k=1}^{\infty}\frac{(2\lambda+\hat{\kappa}_{k})^{2}\hat{\kappa}_{k}\hat{\xi}_{ik}\hat{\xi}_{jk}}{4(\lambda+\hat{\kappa}_{k})^{4}},\quad\text{ \quad}(A_{I}^{4})_{ij}=\frac{1}{n}\sum_{k=1}^{\infty}\frac{(2\lambda+\hat{\kappa}_{k})^{4}(\hat{\kappa}_{k})^{3}\hat{\xi}_{ik}\hat{\xi}_{jk}}{16(\lambda+\hat{\kappa}_{k})^{8}},
\]
and
\[
tr(A_{I}^{2})\asymp\sum_{k=1}^{\infty}\frac{\hat{\kappa}_{k}^{2}}{(\lambda+\hat{\kappa}_{k})^{2}},\quad\text{ \quad}tr(A_{I}^{4})\asymp\sum_{k=1}^{\infty}\frac{\hat{\kappa}_{k}^{4}}{(\lambda+\hat{\kappa}_{k})^{4}}.
\]
Since $\frac{\hat{\kappa}_{k}^{4}}{(\lambda+\hat{\kappa}_{k})^{4}}\leq\frac{\hat{\kappa}_{k}^{2}}{(\lambda+\hat{\kappa}_{k})^{2}}$,
therefore $tr(A_{n}^{4})=O(\sigma_{n}^{2})$, and further $tr(A_{n}^{4})/\sigma_{n}^{4}\rightarrow0$.

For part (b), Under $H'_{1}$, 
\begin{align*}
\frac{1}{n}\mbox{RSS}_{0} & =\frac{1}{n}\epsilon^{T}\epsilon+\|\beta_{0}\|_{\hat{\Gamma}}^{2}+\frac{2}{n}\epsilon^{T}\int\beta_{0}{\bf X},\\
\frac{1}{n}\mbox{RSS}_{1} & =\frac{1}{n}\epsilon^{T}\epsilon+\|\hat{\beta}-\beta_{0}\|_{\hat{\Gamma}}^{2}-\frac{2}{n}\epsilon^{T}\int(\hat{\beta}-\beta_{0}){\bf X}\\
 & =\sigma^{2}-\frac{2}{n}\epsilon^{T}A\epsilon\\
 & \qquad+\lambda^{2}\int_{0}^{1}\int_{0}^{1}\hat{Q}(t,s)\hat{Q}^{+}\beta_{0}^{(m)}(t)\hat{Q}^{+}\beta_{0}^{(m)}(s)dtds\\
 & \qquad+(-1)^{m}\frac{2\lambda}{n}\epsilon^{T}\int_{0}^{1}\hat{U}(t)\hat{Q}^{+}\beta_{0}^{(m)}(t)dt-\frac{2}{n}\epsilon^{T}\int\beta_{0}{\bf X}\\
 & \qquad+(-1)^{m+1}\frac{2\lambda}{n}\epsilon^{T}\int_{0}^{1}\int_{0}^{1}\hat{Q}(t,s)\hat{Q}^{+}\beta_{0}^{(m)}(t)\hat{Q}^{+}\hat{U}(s)dtds.
\end{align*}

For $\frac{1}{n}\mbox{RSS}_{0}$,
\begin{align*}
Var(\frac{2}{n}\epsilon^{T}\int\beta_{0}{\bf X})=\frac{4}{n^{2}}Var(\sum_{i=1}^{n}\epsilon_{i}\int\beta_{0}X_{i})=\frac{4\sigma^{2}}{n^{2}}\sum_{i=1}^{n}\{\int\beta_{0}(s)X_{i}(s)ds\}^{2}=O(\frac{1}{n}||\beta_{0}||_{\hat{\Gamma}}^{2}).
\end{align*}

For $\frac{1}{n}\mbox{RSS}_{1}$, write $\beta_{0}^{(m)}(t)=\sum_{j=1}^{\infty}\hat{\eta}_{j}\hat{\phi}_{j}(t)$.
Since $\beta_{0}^{(m)}\in L_{2}$, we have $\sum_{j=1}^{\infty}\hat{\eta}_{j}^{2}<\infty$.
In the above expansion of $\frac{1}{n}\mbox{RSS}_{1}$, 
\begin{align*}
\lambda^{2}\int_{0}^{1}\int_{0}^{1}\hat{Q}(t,s) & \hat{Q}^{+}\beta_{0}^{(m)}(t)\hat{Q}^{+}\beta_{0}^{(m)}(s)dtds\\
= & \lambda^{2}\sum_{j=1}^{\infty}\frac{\hat{\kappa}_{j}\hat{\eta}_{j}^{2}}{(\lambda+\hat{\kappa}_{j})^{2}}\le\lambda^{2}\sum_{j=1}^{\infty}\hat{\eta}_{j}^{2}\sup_{x\ge0}\frac{x}{(\lambda+x)^{2}}\le\frac{\lambda}{4}\sum_{j=1}^{\infty}\hat{\eta}_{j}^{2}=O(\lambda).
\end{align*}
Further, 
\[
(-1)^{m}\frac{2\lambda}{n}\epsilon^{T}\int_{0}^{1}\hat{U}(t)\hat{Q}^{+}\beta_{0}^{(m)}(t)dt=(-1)^{m}\frac{2\lambda}{n}\sum_{i=1}^{n}\epsilon_{i}\sum_{k=1}^{\infty}\frac{\hat{\xi}_{ik}\hat{\eta}_{k}}{\lambda+\hat{\kappa}_{k}},
\]
and its variance is 
\[
\frac{4\lambda^{2}\sigma^{2}}{n}\sum_{k=1}^{\infty}\frac{\hat{\kappa}_{k}\eta_{k}^{2}}{(\lambda+\hat{\kappa}_{k})^{2}}\le\frac{4\lambda^{2}\sigma^{2}}{n}\sup_{x\ge0}\frac{x}{(\lambda+x)^{2}}\sum_{j=1}^{\infty}\hat{\eta}_{j}^{2}\le\frac{\lambda\sigma^{2}}{n}\sum_{j=1}^{\infty}\hat{\eta}_{j}^{2}=O(\lambda/n).
\]
The last term becomes 
\[
(-1)^{m+1}\frac{2\lambda}{n}\epsilon^{T}\int_{0}^{1}\int_{0}^{1}\hat{Q}(t,s)\hat{Q}^{+}\beta_{0}^{(m)}(t)\hat{Q}^{+}\hat{U}(s)dtds=(-1)^{m+1}\frac{2\lambda}{n}\sum_{i=1}^{n}\epsilon_{i}\sum_{k=1}^{\infty}\frac{\hat{\kappa}_{k}\hat{\eta}_{k}\hat{\xi}_{ik}}{(\lambda+\hat{\kappa}_{k})^{2}}.
\]
Since $\sum_{k=1}^{\infty}\frac{\hat{\kappa}_{k}\hat{\eta}_{k}\hat{\xi}_{ik}}{(\lambda+\hat{\kappa}_{k})^{2}}\leq\sum_{k=1}^{\infty}\frac{\hat{\eta}_{k}\hat{\xi}_{ik}}{\lambda+\hat{\kappa}_{k}}$,
the variance of last term is controlled by $O(\lambda/n)$. So altogether,
\[
\frac{1}{n}\mbox{RSS}_{1}-\frac{1}{n}\mbox{RSS}_{0}=-\frac{2}{n}\epsilon^{T}A\epsilon-\|\beta_{0}\|_{\hat{\Gamma}}^{2}+O\Big(\lambda\Big)+O_{p}\Big(n^{-1/2}\lambda^{1/2}\Big)+O_{p}\Big(n^{-1/2}\|\beta_{0}\|_{\hat{\Gamma}}\Big).
\]
Since $\rho_{n}^{2}=o(n^{-1/2})$ and $\lambda=o(n^{-1/2})$, therefor
$\frac{1}{n}\mbox{RSS}_{1}-\frac{1}{n}\mbox{RSS}_{0}=o_{p}(n^{-1/2})$
and 
\[
\tau_{n,\lambda}=z^{T}Az+\frac{n}{2\sigma^{2}}\|\beta_{0}\|_{\hat{\Gamma}}^{2}+O\Big(n\lambda\Big)+O_{p}\Big(n^{1/2}\lambda^{1/2}\Big)+O_{p}\Big(n^{1/2}\|\beta_{0}\|_{\hat{\Gamma}}\Big).
\]


\subsection{Proof of Theorem \ref{thm:lower}}

The proof follows \cite{ingster_93}. 
First show part (a).
Let
\begin{equation*}
\rho_{n}=n^{-2r/(1+4r)},
\end{equation*}
and suppose that $\rho_{n}'/\rho_{n}\rightarrow0$.
We show that, for any test $\phi_{n}$, 
\[
\lim\inf_{n\rightarrow\infty}\gamma_{n}(\phi_{n},\rho_{n}')\ge1.
\]
The idea of deriving the lower bound is standard. Let $\pi_{n}$ be
a probability measure on ${\cal F}_{K,\Gamma}(\rho_{n}')$. Then the
lower bound is based on the inequality 
\[
\sup_{f\in{\cal F}_{K,\Gamma}(\rho_{n}')}\mathbb{P}_{f}(\phi_{n}=0)\ge \mathbb{P}_{f,\pi_{n}}(\phi_{n}=0),
\]
where $\mathbb{P}_{f,\pi_{n}}=\int\mathbb{P}_{f}d\pi_{n}$. Write
\[
\gamma_{n,\pi_{n}}=\mathbb{P}_{0}(\phi_{n}=1)+\mathbb{P}_{f,\pi_{n}}(\phi_{n}=0).
\]
Denote by $\ell_{n,\pi_{n}}$ the likelihood ratio, 
\[
\ell_{n,\pi_{n}}=\frac{d~\mathbb{P}_{f,\pi_{n}}}{d~\mathbb{P}_{0}}=\int\frac{d~\mathbb{P}_{f}}{d~\mathbb{P}_{0}}d\pi_{n}.
\]
For any $f\in{\cal F}_{K,\Gamma}(\rho_{n})$, direct calculation yields
that 
\[
\log\frac{d~\mathbb{P}_{f}}{d~\mathbb{P}_{0}}=\frac{1}{\sigma^{2}}\sum_{i=1}^{n}Y_{i}\int X_{i}f-\frac{n}{2\sigma^{2}}\|f\|_{\hat{\Gamma}}^{2},
\]
where $\hat{\Gamma}$ is the empirical covariance function such as
\[
\hat{\Gamma}(t,s)=\frac{1}{n}\sum_{i=1}^{n}X_{i}(t)X_{i}(s).
\]
It is convenient to use the following inequalities \cite{ingster_86}:
\[
\gamma_{n,\pi_{n}}(\phi_{n},\rho_{n}')=1-\frac{1}{2}\mbox{var}(\mathbb{P}_{0},\mathbb{P}_{f,\pi_{n}})\ge1-\frac{1}{2}\delta_{n,\pi_{n}},
\]
where $\mbox{var}(\mathbb{P}_{0},\mathbb{P}_{f,\pi_{n}})$ stands
for $L_{1}$ distance between two measures, and 
\[
\delta_{n,\pi_{n}}^{2}=\mathbb{E}_{0}(\ell_{n,\pi_{n}}-1)^{2}.
\]

In the following, we select a probability measure $\pi_{n}$ for which
$\gamma_{n,\pi_{n}}$ can be effectively estimated. Recall that $K=TT^{*}$,
where $T^{*}$ is the adjoint operator to $T$ such that $\langle f,Tg\rangle=\langle T^{*}f,g\rangle$.
Define the linear operator $T\hat{\Gamma}T^{*}$ and let $\hat{s}_{1}\ge\hat{s}_{2}\ge\cdots\ge0$
be the eigenvalues of $T\hat{\Gamma}T^{*}$ and the $\hat{\varphi}_{k}$
be the corresponding eigenfunctions. Consider 
\begin{equation}
f_{\xi}=u\sum_{k=1}^{M}\xi_{k}g_{k},\label{equ:fxi}
\end{equation}
where $\xi=(\xi_{1},\ldots,\xi_{M})$ and $\xi_{k}=\pm1$ with probability
$1/2$, and $g_{k}=\hat{s}_{k}^{-1/2}T^{*}\hat{\varphi}_{k}$. In
(\ref{equ:fxi}), we choose $M=2n^{2/(4r+1)}$ and $u=n^{-1/(4r+1)}\rho_{n}'$.
Note that 
\[
\Big\langle g_{k},g_{j}\Big\rangle_{\hat{\Gamma}}=(\hat{s}_{k}\hat{s}_{j})^{-1/2}\Big\langle T^{*}\hat{\varphi}_{k},T^{*}\hat{\varphi}_{j}\Big\rangle_{\hat{\Gamma}}=(\hat{s}_{k}\hat{s}_{j})^{-1/2}\Big\langle T{\hat{\Gamma}}T^{*}\hat{\varphi}_{k},\hat{\varphi}_{j}\Big\rangle_{L_{2}}=\delta_{jk},
\]
where $\delta_{jk}=1$ for $j=k$, and $0$ for $j\neq k$. Further,
\[
\Big\langle g_{k},g_{j}\Big\rangle_{{\cal H}(K)}=(\hat{s}_{k}\hat{s}_{j})^{-1/2}\Big\langle T^{*}\hat{\varphi}_{k},T^{*}\hat{\varphi}_{j}\Big\rangle_{{\cal H}(K)}=(\hat{s}_{k}\hat{s}_{j})^{-1/2}\Big\langle\hat{\varphi}_{k},\hat{\varphi}_{j}\Big\rangle_{L_{2}}=(\hat{s}_{k}\hat{s}_{j})^{-1/2}\delta_{jk}.
\]
 It is easy to check that 
\begin{align*}
\|f_{\xi}\|_{{\cal H}(K)}^{2}=u^{2}\sum_{k=1}^{M}\hat{s}_{k}^{-1}\le u^{2}M\hat{s}_{M}^{-1}=2(\rho_{n}')^{2}s_{M}^{-1}(1+o_{p}(1)),
\end{align*}
which is bounded since $s_{M}$ has the same order with $\rho_{n}^{2}=n^{-4r/(4r+1)}$
and $\rho_{n}'/\rho_{n}=o(1)$. For any $\varphi\in L_{2}$, $T^{*}\varphi\in{\cal H}(K)$
\citep{cucker_01}. Therefore, $f_{\xi}\in{\cal H}(K)$. On the other
hand, 
\[
\|f_{\xi}\|_{\hat{\Gamma}}^{2}=Mu^{2}=2(\rho_{n}')^{2}.
\]
So, $\|f_{\xi}\|_{\Gamma}^{2}=2(\rho_{n}')^{2}(1+o(1))\ge(\rho_{n}')^{2}$
and it shows that $f_{\xi}\in{\cal F}_{K,\Gamma}(\rho_{n}')$.

For this case, the likelihood ratio is 
\begin{align*}
\ell_{n,\pi_{n}} & =\mathbb{E}_{\xi}\frac{d~\mathbb{P}_{f_{\xi}}}{d~\mathbb{P}_{0}}=\exp\Big(-\frac{nMu^{2}}{2\sigma^{2}}\Big)\mathbb{E}_{\xi}\exp(\frac{u}{\sigma^{2}}\sum_{k=1}^{M}\sum_{i=1}^{n}Y_{i}x_{ik}\xi_{k})\\
& =\exp\Big(-\frac{nMu^{2}}{2\sigma^{2}}\Big)\prod_{k=1}^{M}cosh(\frac{u}{\sigma^{2}}\sum_{k=1}^{M}Y_{i}x_{ik}).
\end{align*}
where $x_{ik}$ is denoted as $x_{ik}=\int X_{i}g_{k}$. Note that
$\sum_{i=1}^{n}x_{ik}^{2}=n\|g_{k}\|_{\hat{\Gamma}}^{2}=n$. Given
$X_{1},\ldots,X_{n}$, we have 
\begin{align*}
\mathbb{E}_{0}\Big(\ell_{n,\pi_{n}}\Big|X_{1},\ldots,X_{n}\Big) & =\mathbb{E}_{0}\mathbb{E}_{\xi}\frac{d~\mathbb{P}_{f_{\xi}}}{d~\mathbb{P}_{0}}=\mathbb{E}_{\xi}\mathbb{E}_{0}\frac{d~\mathbb{P}_{f_{\xi}}}{d~\mathbb{P}_{0}}\\
 & =\exp\Big(-\frac{nMu^{2}}{2\sigma^{2}}\Big)\mathbb{E}_{\xi}\prod_{i=1}^{n}\mathbb{E}_{0}\exp(\frac{u}{\sigma^{2}}\sum_{k=1}^{M}\xi_{k}x_{ik}\cdot Y_{i})\\
 & =\exp\Big(-\frac{nMu^{2}}{2\sigma^{2}}\Big)\mathbb{E}_{\xi}\prod_{i=1}^{n}\exp\{\frac{1}{2}\sigma^{2}(\frac{u}{\sigma^{2}}\sum_{k=1}^{M}\xi_{k}x_{ik})^{2}\}\\
 & =\exp(-\frac{nMu^{2}}{2\sigma^{2}})\exp(\frac{nMu^{2}}{2\sigma^{2}})=1.
\end{align*}
Noting that
\begin{align*}
\ell_{n,\pi_{n}}^{2} & =\exp\Big(-\frac{nMu^{2}}{\sigma^{2}}\Big)\prod_{k=1}^{M}cosh(\frac{u}{\sigma^{2}}\sum_{k=1}^{M}Y_{i}x_{ik})^{2}\\
 & =\exp\Big(-\frac{nMu^{2}}{\sigma^{2}}\Big)\prod_{k=1}^{M}(\frac{1}{4}e^{\frac{2u}{\sigma^{2}}\sum_{k=1}^{M}Y_{i}x_{ik}}+\frac{1}{4}e^{-\frac{2u}{\sigma^{2}}\sum_{k=1}^{M}Y_{i}x_{ik}}+\frac{1}{2})\\
 & =\exp\Big(-\frac{nMu^{2}}{\sigma^{2}}\Big)\mathbb{E}_{\varsigma}\exp(\frac{2u}{\sigma^{2}}\sum_{k=1}^{M}\sum_{i=1}^{n}Y_{i}x_{ik}\varsigma_{k}),
\end{align*}
where random variable $\varsigma$ takes value $-1,0,1$ with probability
$1/4,\,1/2,\,1/4$. Therefore we can calculate $\mathbb{E}_{0}\Big(\ell_{n,\pi_{n}}^{2}\Big|X_{1},\ldots,X_{n}\Big)$
as 
\begin{align*}
\mathbb{E}_{0}\Big(\ell_{n,\pi_{n}}^{2}\Big|X_{1},\ldots,X_{n}\Big) & =\mathbb{E}_{0}\exp\Big(-\frac{nMu^{2}}{\sigma^{2}}\Big)\mathbb{E}_{\varsigma}\exp(\frac{2u}{\sigma^{2}}\sum_{k=1}^{M}\sum_{i=1}^{n}Y_{i}x_{ik}\varsigma_{k})\\
 & =\exp\Big(-\frac{nMu^{2}}{\sigma^{2}}\Big)\mathbb{E}_{\varsigma}\prod_{i=1}^{n}\mathbb{E}_{0}\exp(\frac{2u}{\sigma^{2}}\sum_{k=1}^{M}x_{ik}\varsigma_{k}\cdot Y_{i})\\
 & =\exp\Big(-\frac{nMu^{2}}{\sigma^{2}}\Big)\mathbb{E}_{\varsigma}\exp(\frac{2u^{2}n}{\sigma^{2}}\sum_{k=1}^{M}\varsigma_{k}^{2})\\
 & =\exp\Big(-\frac{nMu^{2}}{\sigma^{2}}\Big)\prod_{k=1}^{M}\{\frac{1}{2}\exp(\frac{2u^{2}n}{\sigma^{2}})+\frac{1}{2}\}\\
 & =\{\cosh(\frac{u^{2}n}{\sigma^{2}})\}^{M},
\end{align*}
and
\begin{align*}
\mathbb{E}_{0}(\ell_{n,\pi_{n}}-1)^{2} & =\mathbb{E}_{0}\Big(\ell_{n,\pi_{n}}^{2}\Big|X_{1},\ldots,X_{n}\Big)-2\mathbb{E}_{0}\Big(\ell_{n,\pi_{n}}\Big|X_{1},\ldots,X_{n}\Big)+1\\
 & =\{\cosh(\frac{u^{2}n}{\sigma^{2}})\}^{M}-1.
\end{align*}
Using the inequality $\log\cosh x\le Bx^{2}$ for a certain $B$,
\begin{align*}
\{\cosh(\frac{u^{2}n}{\sigma^{2}})\}^{M}-1 & \le\exp\Big(\frac{BMu^{4}n^{2}}{\sigma^{4}}\Big)-1.
\end{align*}
Hence 
\[
\mathbb{E}_{0}(\ell_{n,\pi_{n}}-1)^{2}\le\exp\Big(\frac{Bn^{2}Mu^{4}}{\sigma^{4}}\Big)-1.
\]
Our choices of $M$ and $u$ guarantees that $n^{2}Mu^{4}\rightarrow0$,
so $\lim\inf_{n\rightarrow}\gamma_{n}(\phi_{n},\rho_{n}')=1$. This
completes the proof of part (a).

Next, we prove part (b). The proof is similar. In particular, in (\ref{equ:fxi})
we choose $M=\log n/(2r)$ and $u=2\rho_{n}'\sqrt{2r/\log n}$, where
$\rho_{n}'/\rho_{n}\rightarrow0$ with $\rho_{n}=n^{-1/2}(\log n/(2r))^{1/4}$.
It is easy to see that $n^{2}Mu^{4}\rightarrow0$, so that $\lim\inf_{n\rightarrow}\gamma_{n}(\phi_{n},\rho_{n}')=1$.
This completes the proof of part (b).

\subsection{Proof of Theorem \ref{thm:ttheory}}

Recall that $H'_{1}:{\cal F}'_{K,\Gamma}(\rho_{n})=\big\{\beta\in\mathcal{H}(K):~\|\beta\|_{\Gamma}=\rho_{n}\big\}$,
we only need to show that 
\[
\lim_{c_{n}\rightarrow\infty}\inf_{\beta_{0}\in{\cal F}'_{K,\Gamma}(c_{n}\rho_{n})}\mathbb{P}_{\beta_{0}}\Big(\frac{\tau_{n,\lambda}-\mu_{n}}{\sigma_{n}}>z_{\alpha}\Big)=1.
\]
The power function under $H_{1}'$ can be written as
\begin{align*}
 & \mathbb{P}_{\beta_{0}}\Big(\frac{\tau_{n,\lambda}-\mu_{n}}{\sigma_{n}}\ge z_{\alpha}\Big)\\
= & \mathbb{P}_{\beta_{0}}\Big\{\frac{z^{T}Az-\mu_{n}}{\sigma_{n}}+\frac{\frac{n}{2\sigma^{2}}\|\beta_{0}\|_{\hat{\Gamma}}^{2}+O\Big(n\lambda\Big)+O_{p}\Big(n^{1/2}\lambda^{1/2}\Big)+O_{p}\Big(n^{1/2}\|\beta_{0}\|_{\hat{\Gamma}}\Big)}{\sigma_{n}}\ge z_{\alpha}\Big\}.
\end{align*}
 Recall that $\sigma_{n}^{2}=tr(A^{2})=O(tr(A))$ as shown in the
proof as Theorem 2, 
and by Lemma~3,
we have 
\[
\mu_{n}=O_{p}(\sum_{k=1}^{\infty}\frac{s_{k}}{\lambda+s_{k}})\quad\quad and\quad\quad\sigma_{n}^{2}=O_{p}(\sum_{k=1}^{\infty}\frac{s_{k}}{\lambda+s_{k}}).
\]
Therefore $\mu_{n}$ and $\sigma_{n}^{2}$ are of order $O_{p}(\lambda^{-1/2r}$)
when $s_{k}\asymp k^{-2r}$, or of order $O_{p}\{(2r)^{-1}\log\lambda^{-1}\}$
when $s_{k}\asymp e^{-2rk}$. Recall that when $\kappa_{k}\asymp k^{-2r}$,
the optimal $\lambda$ is of order $n^{-4r/(4r+1)}$; when $\kappa_{k}\asymp exp^{-2rk}$,
$\log\lambda^{-1}=O(\log n)$. So, when $\kappa_{k}\asymp k^{-2r}$,
\[
\lim_{c_{n}\rightarrow\infty}\inf_{\beta\in{\cal F}_{K,\Gamma}:\|\beta\|_{\Gamma}\ge c_{n}n^{-2r/(4r+1)}}\mathbb{P}_{\beta}\Big(\frac{\tau_{n,\lambda}-\mu_{n}}{\sigma_{n}}\ge z_{\alpha}\Big)=1,
\]
and when $\kappa_{k}\asymp e^{-2rk}$, 
\[
\lim_{c_{n}\rightarrow\infty}\inf_{\beta\in{\cal F}_{K,\Gamma}:\|\beta\|_{\Gamma}\ge c_{n}\{\log n/(2rn^{2})\}^{1/4}}\mathbb{P}_{\beta}\Big(\frac{\tau_{n,\lambda}-\mu_{n}}{\sigma_{n}}\ge z_{\alpha}\Big)=1.
\]
This finishes the proof of the theorem.


\subsection{Proof of Theorem \ref{thm:adaptive}}

First noting that $s_{k}$ and $\kappa_{k}$ have the same decay rate,
so we can replace $s_{k}$ in condition $s_{k}\asymp k^{-2r}$ 
by $\kappa_{k}$. 

Given a symmetric bivariate function $M$, let $|||M|||=(\int\int M^{2})^{1/2}$.
Define $\delta_{k}=\min_{1\le j\le k}(\kappa_{j}-\kappa_{j+1})$ which
is of order $k^{-2r-1}$. $\tilde{\Delta}=|||\tilde{Q}-Q|||$, $\tilde{\Delta}_{j}=\|\int(\tilde{Q}-Q)\phi_{j}\|$,
and 
\[
\tilde{\Delta}_{jj}=\int_{0}^{1}\int_{0}^{1}(\tilde{Q}(t,s)-Q(t,s))\phi_{j}(t)\phi_{j}(s)dtds.
\]
It follows from Equation (5.7) of \cite{hall_07} that 
\[
\Big|\tilde{\kappa}_{j}-\kappa_{j}-\tilde{\Delta}_{jj}\Big|\le\delta_{j}^{-1}\tilde{\Delta}(\tilde{\Delta}+\tilde{\Delta}_{j}),
\]
and we also have $\mathbb{E}\tilde{\Delta}_{jj}^{2}\le C_{1}n^{-1}\kappa_{j}^{2}$
and $\mathbb{E}(\tilde{\Delta}^{2}+\tilde{\Delta}_{j}^{2})\le C_{2}n^{-1}$
where $C_{1}$ and $C_{2}$ do not depend on $j$. Observe that 
\[
\sum_{j=1}^{\varrho}|\tilde{\kappa}_{j}-\kappa_{j}|\le\sum_{j=1}^{\varrho}|\tilde{\Delta}_{j}|+\tilde{\Delta}\sum_{j=1}^{\varrho}\delta_{j}^{-1}(\tilde{\Delta}+\tilde{\Delta}_{j}).
\]
Further, $\sum_{j=1}^{\varrho}|\tilde{\Delta}_{jj}|$ is of order
$O_{p}(n^{-1/2})$ since 
\[
\mathbb{E}\sum_{j=1}^{\varrho}|\tilde{\Delta}_{jj}|\le\sum_{j=1}^{\varrho}\sqrt{\mathbb{E}\tilde{\Delta}_{jj}^{2}}\le C_{1}n^{-1/2}\sum_{j=1}^{\varrho}\kappa_{j}=O(n^{-1/2}),
\]
and $\tilde{\Delta}\sum_{j=1}^{\varrho}\delta_{j}^{-1}(\tilde{\Delta}+\tilde{\Delta}_{j})$
is or order $O_{p}(n^{-1}\varrho^{2r+2})$ since 
\[
\mathbb{E}\sum_{j=1}^{\varrho}|\delta_{j}^{-1}(\tilde{\Delta}+\tilde{\Delta}_{j})|\le\sum_{j=1}^{\varrho}\delta_{j}^{-1}\sqrt{2\mathbb{E}(\tilde{\Delta}^{2}+\tilde{\Delta}_{j}^{2})}\le\sqrt{2C_{2}n^{-1}}\sum_{j=1}^{\varrho}\delta_{j}^{-1}=O(n^{-1/2}\varrho^{2r+2}).
\]
Hence, 
\[
\sum_{j=1}^{\varrho}\Big|\tilde{\kappa}_{j}-\kappa_{j}\Big|=O_{p}\Big(n^{-1/2}+n^{-1}\varrho^{2r+2}\Big).
\]
On the other hand, since $E(\tilde{Q}-Q)^{2}=O(n^{-1})$ uniformly
on $[0,1]^{2}$, 
\begin{align*}
\Big|\sum_{j=\varrho+1}^{\infty}(\tilde{\kappa}_{j}-\kappa_{j})\Big| & =\Big|\int\int(\tilde{Q}-Q)(s,t)dsdt-\sum_{j=1}^{\varrho}(\tilde{\kappa}_{j}-\kappa_{j})\Big|\\
 & \le[\int\int(\tilde{Q}-Q)^{2}]^{1/2}+\Big|\sum_{j=1}^{\varrho}\Big(\tilde{\kappa}_{j}-\kappa_{j}\Big)\Big|\\
 & =O_{p}(n^{-1/2}+n^{-1}\varrho^{2r+2}).
\end{align*}
If we choose $\rho\asymp n^{1/(4r+1)}$, we have 
\[
\Big|\sum_{j=1}^{\varrho}(\tilde{\kappa}_{j}-\kappa_{j})\Big|=O_{p}(n^{(-2r+1)/(4r+1)}),~~~~\Big|\sum_{j=\varrho+1}^{\infty}(\tilde{\kappa}_{j}-\kappa_{j})\Big|=O_{p}(n^{(-2r+1)/(4r+1)}).
\]

Define the event ${\cal E}_{\varrho}$ by 
\[
{\cal E}_{\varrho}={\cal E}_{\varrho}(n)=\{\frac{1}{2}\kappa_{\varrho}\ge\tilde{\Delta}\}.
\]
Since $\sup_{k\ge1}|\tilde{\kappa}_{k}-\kappa_{k}|\le\tilde{\Delta}$
\cite{bhatia_83}, if ${\cal E}_{\varrho}$ holds, we have $\tilde{\kappa}_{k}\ge\frac{1}{2}\kappa_{k}$
for $1\le k\le\varrho$. Here, we choose $\varrho\asymp n^{1/(4r+1)}$,
which implies that $n^{1/2}\kappa_{\varrho}\rightarrow\infty$ as
$n\rightarrow\infty$. Since $\tilde{\Delta}=O_{p}(n^{-1/2})$, we
have $\mathbb{P}({\cal E}_{\varrho})\rightarrow1$. Therefore, since
the result we wish to prove only relates to probabilities of differences
(not to moments of differences), it suffices to work with bounds that
are established under the assumption that ${\cal E}_{\varrho}$ holds.
The optimal choice $\tilde{\lambda}$ is the root of 
\[
\frac{1}{n}\sum_{k=1}^{\infty}\frac{\tilde{\kappa}_{k}}{(\sqrt{\lambda}+\tilde{\kappa}_{k})^{2}}=2\sqrt{\lambda}.
\]
In the following, we derive the asymptotic order of $\sum_{k=1}^{\infty}\frac{\tilde{\kappa}_{k}}{(\sqrt{\lambda}+\tilde{\kappa}_{k})^{2}}$.
Note that 
\begin{align}
\sum_{k=1}^{\infty}\frac{\tilde{\kappa}_{k}}{(\sqrt{\lambda}+\tilde{\kappa}_{k})^{2}} & =\sum_{k=1}^{\varrho}\frac{\tilde{\kappa}_{k}}{(\sqrt{\lambda}+\tilde{\kappa}_{k})^{2}}+\sum_{k=\varrho+1}^{\infty}\frac{\tilde{\kappa}_{k}}{(\sqrt{\lambda}+\tilde{\kappa}_{k})^{2}}\nonumber \\
 & \le\sum_{k=1}^{\varrho}\tilde{\kappa}_{k}^{-1}+\lambda^{-1}\sum_{k=\varrho+1}^{\infty}\tilde{\kappa}_{k}\nonumber \\
 & \le2\sum_{k=1}^{\varrho}\kappa_{k}+\lambda^{-1}\sum_{k=\varrho+1}^{\infty}\kappa_{k}+\lambda^{-1}\Big|\sum_{j=\varrho+1}^{\infty}(\tilde{\kappa}_{j}-\kappa_{j})\Big|\nonumber \\
 & =O_{p}\Big(n^{(2r+1)/(4r+1)}+\lambda^{-1}n^{(-2r+1)/(4r+1)}\Big).\label{eq:high}
\end{align}
We also need the lower bound for $\sum_{k=1}^{\infty}\frac{\tilde{\kappa}_{k}}{(\sqrt{\lambda}+\tilde{\kappa}_{k})^{2}}$.
This follows from 
\begin{align}
\sum_{k=1}^{\infty}\frac{\tilde{\kappa}_{k}}{(\sqrt{\lambda}+\tilde{\kappa}_{k})^{2}} & \ge\sum_{k=\varrho+1}^{\infty}\frac{\tilde{\kappa}_{k}}{(\sqrt{\lambda}+\tilde{\kappa}_{k})^{2}}\nonumber \\
 & \ge\frac{1}{(\sqrt{\lambda}+\tilde{\kappa}_{\varrho})^{2}}\sum_{k=\varrho+1}^{\infty}\tilde{\kappa}_{k}\nonumber \\
 & \ge\frac{1}{2(\lambda+O_{p}(n^{-4r/(4r+1)}))}O_{p}(n^{(-2r+1)/(4r+1)}).\label{eq:low}
\end{align}
Combining (\ref{eq:high}) and (\ref{eq:low}), we obtain that $\tilde{\lambda}$
is of order $O_{p}(n^{4r/(4r+1)})$.


\section{Proof of Propositions and Lemmas \label{sec:Proof-of-Propositions}}

\noindent \textit{Proof of Proposition \ref{prop:kt}}.
Let ${\cal D}_{k}=\mbox{span}\{f_{1},\ldots,f_{k}:K^{1/2}f_{j}=T^{*}\varphi_{j},j=1,\ldots,k\}$.
It follows from the minimax principal that 
\begin{align*}
\tilde{s}_{k} & \le\sup_{f\in{\cal D}_{k}^{\perp}}\frac{\Big\langle K^{1/2}\Gamma K^{1/2}f,~f\Big\rangle}{\langle f,f\rangle}=\sup_{f\in{\cal D}_{k}^{\perp}}\frac{\Big\langle\Gamma K^{1/2}f,~K^{1/2}f\Big\rangle}{\langle f,f\rangle}\\
 & =\sup_{g\in\tilde{{\cal D}}_{k}^{\perp}}\frac{\Big\langle\Gamma T^{*}g,~T^{*}g\Big\rangle}{\langle g,g\rangle}\frac{\langle g,g\rangle}{\langle T^{*}g,T^{*}g\rangle}\le cs_{k},
\end{align*}
where $\tilde{s}_{k}$ is the $k$th eigenvalue of $K^{1/2}\Gamma K^{1/2}$,
$\tilde{{\cal D}}_{k}=\mbox{span}\{\varphi_{1},\ldots,\varphi_{k}\}$
and the constant $c>0$ does not depend on $k$. Using a similar argument,
we may show that $s_{k}\le c\tilde{s}_{k}$. Therefore, the eigenvalues
of $T\Gamma T^{*}$ and $K^{1/2}\Gamma K^{1/2}$ have the same decay
rate. 
\qed 

\begin{lemma}
\label{lem:psi} The following statements are true: 

(a). The $\beta\in W_{2}^{m}$ minimizes $L(\beta)$, if and only
if, $L_{1}(\beta,\beta_{1})=0$ for all $\beta_{1}\in W_{2}^{m}$. 

(b). If $\beta\in W_{2}^{m}$ minimizes $L(\beta)$, then for all
$\beta_{1}\in W_{2}^{m}$, 
\begin{equation}
L_{1}(\beta,\beta_{1})=\int_{0}^{1}L_{2}(\beta)(t)~\beta_{1}^{(m)}(t)dt,\label{equ:psi2-1}
\end{equation}
where 
\begin{equation}
L_{2}(\beta)=(\lambda I+\hat{Q})\beta^{(m)}-\frac{(-1)^{m}}{n}\hat{U}^{T}{\bf Y}.\label{equ:psi2-2}
\end{equation}
\end{lemma}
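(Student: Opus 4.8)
The plan is to handle part (a) by the standard convexity argument and part (b) by converting the first variation into an explicit ``Euler--Lagrange'' identity using the operators $T_0^m,T_1^m$ and the idempotent $\hat B$. For part (a), I would expand $L$ along a ray exactly: for $\beta,\beta_1\in W_2^m$ and $\delta\in\mathbb R$,
\[
L(\beta+\delta\beta_1)=L(\beta)+2\delta\,L_1(\beta,\beta_1)+\delta^2\Big(\tfrac1n\sum_{i=1}^n\Big(\int X_i\beta_1\Big)^2+\lambda\int_0^1(\beta_1^{(m)})^2\Big),
\]
where the coefficient of $\delta^2$ is nonnegative. If $\beta$ minimizes $L$, then $\delta\mapsto L(\beta+\delta\beta_1)$ is a parabola attaining its minimum at $\delta=0$, so $2L_1(\beta,\beta_1)=0$ for every $\beta_1$; conversely, if $L_1(\beta,\beta_1)=0$ for all $\beta_1$, taking $\beta_1=\tilde\beta-\beta$ and $\delta=1$ in the display shows $L(\tilde\beta)\ge L(\beta)$ for any competitor $\tilde\beta\in W_2^m$.

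For part (b), the first step is to represent an arbitrary test direction by Taylor's formula with integral remainder expanded at $t=1$: $\beta_1=\zeta^{T}\Upsilon_1(1)+(-1)^m T_1^m\beta_1^{(m)}$ with $\Upsilon_1(1)=[\beta_1(1),-\beta_1'(1),\dots,(-1)^{m-1}\beta_1^{(m-1)}(1)]^{T}$ -- the same splitting already used for $\hat\beta$ in (\ref{equ:beta_hat}). Since $\int_0^1\mathbf X(s)\zeta(s)^{T}ds=\widetilde X(1)^{T}$ and $\langle X_i,T_1^m g\rangle=\langle T_0^m X_i,g\rangle$, one gets the vector identity $\int\beta_1\mathbf X=\widetilde X(1)^{T}\Upsilon_1(1)+(-1)^m\int_0^1 T_0^m\mathbf X(s)\beta_1^{(m)}(s)ds$; substituting it into $L_1$ and writing $\mathbf r=\mathbf Y-\int\beta\,\mathbf X$ for the residual vector yields
\[
L_1(\beta,\beta_1)=-\tfrac1n(\widetilde X(1)\mathbf r)^{T}\Upsilon_1(1)+\Big\langle \lambda\beta^{(m)}-\tfrac{(-1)^m}{n}(T_0^m\mathbf X)^{T}\mathbf r,\ \beta_1^{(m)}\Big\rangle .
\]
Applying part (a) to the finite-dimensional family of polynomial directions ($\beta_1^{(m)}\equiv 0$) forces $\widetilde X(1)\mathbf r=0$, which erases the first term for every $\beta_1$.

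It then remains to identify the coefficient of $\beta_1^{(m)}$ with $L_2(\beta)$. Using $\beta=\zeta^{T}\Upsilon(1)+(-1)^m T_1^m\beta^{(m)}$ and writing $w=(-1)^m\int_0^1 T_0^m\mathbf X(s)\beta^{(m)}(s)ds$ gives $\mathbf r=\mathbf Y-\widetilde X(1)^{T}\Upsilon(1)-w$; combining $\widetilde X(1)\mathbf r=0$ with $\widetilde X(1)\widetilde X(1)^{T}=n\hat H$ gives $\widetilde X(1)^{T}\Upsilon(1)=\hat B(\mathbf Y-w)$, and hence the compact form $\mathbf r=(I_n-\hat B)(\mathbf Y-w)$. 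Symmetry and idempotency of $\hat B$ then give $(T_0^m\mathbf X)^{T}(I_n-\hat B)=\hat U^{T}$ and $\hat U(t)^{T}T_0^m\mathbf X(s)=\hat U(t)^{T}\hat U(s)=n\hat Q(t,s)$, so $(T_0^m\mathbf X)^{T}\mathbf r=\hat U^{T}\mathbf Y-(-1)^m n\,\hat Q\beta^{(m)}$, whence $\lambda\beta^{(m)}-\frac{(-1)^m}{n}(T_0^m\mathbf X)^{T}\mathbf r=(\lambda I+\hat Q)\beta^{(m)}-\frac{(-1)^m}{n}\hat U^{T}\mathbf Y=L_2(\beta)$. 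I would also record that $L_2(\beta)\in L_2[0,1]$ (as $\hat Q$ is bounded and $\hat U^{T}\mathbf Y\in L_2$), so the pairing $\int_0^1 L_2(\beta)\beta_1^{(m)}$ is legitimate for all $\beta_1\in W_2^m$.

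The obstacle I anticipate is purely organizational: tracking the $(-1)^m$ signs, the vector-valued integrals $\int_0^1 T_0^m\mathbf X(s)\beta^{(m)}(s)ds$, and the bookkeeping around the projection $\hat B$ -- in particular verifying $\mathbf r=(I_n-\hat B)(\mathbf Y-w)$ and $(T_0^m\mathbf X)^{T}(I_n-\hat B)=\hat U^{T}$. The one genuinely substantive observation is that optimality in the polynomial block is exactly the normal equation $\widetilde X(1)\mathbf r=0$, which is what collapses the general first variation to a single integral against $\beta_1^{(m)}$.
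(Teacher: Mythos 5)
Your proof is correct and follows essentially the same route as the paper's: the exact ray expansion with a nonnegative quadratic term for part (a), and for part (b) the Taylor representation of $\beta_{1}$ at $t=1$, the normal equations $\widetilde{X}(1)\mathbf{r}=0$ obtained from polynomial test directions, and the identification of the coefficient of $\beta_{1}^{(m)}$ with $L_{2}(\beta)$. The only difference is presentational: you make explicit the projection identity $\mathbf{r}=(I_{n}-\hat{B})(\mathbf{Y}-w)$ and the fact that $\hat{U}(t)^{T}T_{0}^{m}\mathbf{X}(s)=n\hat{Q}(t,s)$, steps the paper compresses into the passage from (\ref{equ:beta11}) to the final formula for $L_{2}$.
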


\begin{proof}
First show part (a). If $\hat{\beta}\in W_{2}^{m}$ minimizes $L(\beta)$,
then $L(\hat{\beta}+\delta\beta_{1})-L(\hat{\beta})\ge0$ for all
$\beta_{1}\in W_{2}^{m}$ and any $\delta\in\mathbb{R}$. Then $L_{1}(\hat{\beta},\beta_{1})=0$
follows since $\delta$ can be either negative or positive. On the
other hand, if $L_{1}(\hat{\beta},\beta_{1})=0$, we have $L(\hat{\beta}+\delta\beta_{1})-L(\hat{\beta})\ge0$
by (\ref{equ:temp1}). Thus, $\hat{\beta}$ minimizes $L(\beta)$.
Therefore, part (a) follows.

Let $\beta_{1}(t)=t^{(k-1)}$, $k=1,\ldots,m$ in (\ref{equ:psi1}).
If $\hat{\beta}$ minimizes $L(\beta)$, then
\begin{equation}
\frac{1}{n}\sum_{i=1}^{n}\big\{ Y_{i}-\int_{0}^{1}X_{i}(s)\hat{\beta}(s)ds\big\}\big\{\int_{0}^{1}X_{i}(s)s^{(k-1)}ds\big\}=0.\label{equ:temp2}
\end{equation}
Let $X_{i}^{(-k)}(t)=T_{0}^{k}X_{i}(t)=\int_{0}^{1}\frac{(t-s)_{+}^{(k-1)}}{(k-1)!}X_{i}(s)ds$.
When $k=1$, $\int_{0}^{1}X_{i}(s)s^{(k-1)}ds=X_{i}^{(-1)}(1)$
and further (\ref{equ:temp2}) becomes 
\[
\frac{1}{n}\sum_{i=1}^{n}X_{i}^{(-1)}(1)\big\{ Y_{i}-\int_{0}^{1}X_{i}(s)\hat{\beta}(s)ds\big\}=0.
\]
When $k=2$, we have 
\begin{align*}
\frac{1}{n}\sum_{i=1}^{n} & \big\{ Y_{i}-\int_{0}^{1}X_{i}(s)\hat{\beta}(s)ds\big\}\big\{\int_{0}^{1}X_{i}(s)sds\big\}\\
 & =-\frac{1}{n}\sum_{i=1}^{n}\big\{ Y_{i}-\int_{0}^{1}X_{i}(s)\hat{\beta}(s)ds\big\}\big\{\int_{0}^{1}X_{i}(s)(1-s)ds\big\}\\
 & =-\frac{1}{n}\sum_{i=1}^{n}\big\{ Y_{i}-\int_{0}^{1}X_{i}(s)\hat{\beta}(s)ds\big\}\big\{ X_{i}^{(-2)}(1)\big\}.
\end{align*}
Hence, 
\[
\frac{1}{n}\sum_{i=1}^{n}X_{i}^{(-2)}(1)\big\{ Y_{i}-\int_{0}^{1}X_{i}(s)\hat{\beta}(s)ds\big\}=0.
\]
Following the same procedure, it can be shown that 
\begin{equation}
\frac{1}{n}\sum_{i=1}^{n}X_{i}^{(-k)}(1)\big\{ Y_{i}-\int_{0}^{1}X_{i}(s)\hat{\beta}(s)ds\big\}=0,~k=1,\ldots,m.\label{equ:opt2}
\end{equation}
Considering that 
\[
\beta_{1}(s)=\sum_{k=0}^{m-1}(-1)^{k}\frac{\beta_{1}^{(k)}(1)}{k!}(1-s)^{k}+(-1)^{m}\int_{0}^{1}\frac{\beta_{1}^{(m)}(t)}{(m-1)!}(t-s)_{+}^{m-1}dt.
\]
Therefore

\begin{align}
\int_{0}^{1} & X_{i}(s)\beta_{1}(s)ds\nonumber \\
 & =\sum_{k=0}^{m-1}(-1)^{k}\beta_{1}^{(k)}(1)\int_{0}^{1}X_{i}(s)\frac{(1-s)^{k}}{k!}ds\nonumber\\
 & \quad \quad +(-1)^{m}\int_{0}^{1}\int_{0}^{1}X_{i}(s)\frac{\beta_{1}^{(m)}(t)}{(m-1)!}(t-s)_{+}^{m-1}dtds\nonumber \\
 & =\sum_{k=1}^{m}(-1)^{k-1}\beta_{1}^{(k-1)}(1)~X_{i}^{(-k)}(1)+(-1)^{m}\int_{0}^{1}\beta_{1}^{(m)}(t)X_{i}^{(-m)}(t)dt. \label{equ:Xbtaylor}
\end{align}
If (\ref{equ:opt2}) holds, direct calculation yields 
\begin{align*}
\frac{1}{n}\sum_{i=1}^{n} & \big\{ Y_{i}-\int_{0}^{1}X_{i}(s)\hat{\beta}(s)ds\big\}\big\{\int_{0}^{1}X_{i}(s)\beta_{1}(s)ds\big\}\\
 & =\frac{(-1)^{m}}{n}\sum_{i=1}^{n}\big\{ Y_{i}-\int_{0}^{1}X_{i}(s)\hat{\beta}(s)ds\big\}\big\{\int_{0}^{1}X_{i}^{(-m)}(t)\beta_{1}^{(m)}(t)dt\big\}.
\end{align*}
Recall the definition of $L_{2}(\beta)$, we have 
\begin{align*}
L_{2}(\hat{\beta}) & =\lambda~\hat{\beta}^{(m)}(t)+\frac{(-1)^{m}}{n}\sum_{i=1}^{n}X_{i}^{(-m)}(t)\big\{\int_{0}^{1}X_{i}(s)\hat{\beta}(s)ds-Y_{i}\big\}\\
 & =\lambda~\hat{\beta}^{(m)}(t)+\frac{(-1)^{m}}{n}T_{0}^{(m)}{\bf X}(t)^{T}\Big\{\int_{0}^{1}{\bf X}(s)\hat{\beta}(s)ds-{\bf Y}\Big\}.
\end{align*}
Similar to (\ref{equ:Xbtaylor}), 
\[
\int_{0}^{1}X_{i}(s)\hat{\beta}(s)ds=\widehat{\Upsilon}(1)^{T}\widetilde{X}_{i}(1)+(-1)^{m}\int_{0}^{1}X_{i}^{(-m)}(s)\hat{\beta}^{(m)}(s)ds,~~~j=1,\ldots,m.
\]
which gives 
\[
\int_{0}^{1}{\bf X}(s)\hat{\beta}(s)ds=\widetilde{X}(1)^{T}\widehat{\Upsilon}(1)+(-1)^{m}\int_{0}^{1}T_{0}^{m}{\bf X}(s)\hat{\beta}^{(m)}(s)ds.
\]
This, combining (\ref{equ:opt2}), gives 
\begin{equation}
\hat{H}\widehat{\Upsilon}(1)+\frac{(-1)^{m}}{n}\widetilde{X}(1)\int_{0}^{1}T_{0}^{m}{\bf X}(s)\hat{\beta}^{(m)}(s)ds=\frac{1}{n}\widetilde{X}(1){\bf Y}.\label{equ:beta11}
\end{equation}
So for $\beta\in W_{2}^{m}$ minimizes $L(\beta)$, 
\begin{align*}
L_{2}(\beta)=\lambda~\beta^{(m)}+\hat{Q}\beta^{(m)}-\frac{(-1)^{m}}{n}\hat{U}^{T}{\bf Y}.
\end{align*}
So, part (b) follows.
\end{proof}

\begin{lemma}
\label{lem:expansion} Let $\epsilon=(\epsilon_{1},\ldots,\epsilon_{n})^{T}$.
The following statements hold: 

(a)
\begin{align}
\int_{0}^{1} & {\bf X}(t)\Big\{\hat{\beta}(t)-\beta_{0}(t)\Big\} dt\label{equ:expansion0}\\
 & =(-1)^{m+1}\lambda\int_{0}^{1}\hat{U}(t)\hat{Q}^{+}\beta_{0}^{(m)}(t)dt+\frac{1}{n}\Big\{\int_{0}^{1}\hat{U}(t)\hat{Q}^{+}\hat{U}(t)^{T}dt+\widetilde{X}(1)^{T}\hat{H}^{-1}\widetilde{X}(1)\Big\}\epsilon;\nonumber 
\end{align}

(b)
\begin{align}
\Big\|\hat{\beta}-\beta_{0}\Big\|_{\hat{\Gamma}}^{2} & =\lambda^{2}\int_{0}^{1}\int_{0}^{1}\hat{Q}(t,s)\hat{Q}^{+}\beta_{0}^{(m)}(t)\hat{Q}^{+}\beta_{0}^{(m)}(s)dtds\label{equ:expansion}\\
 & +\frac{1}{n^{2}}\epsilon^{T}\Big\{\int_{0}^{1}\int_{0}^{1}\hat{Q}^{+}\hat{U}(s)\hat{Q}(t,s)\hat{Q}^{+}\hat{U}(t)^{T}dsdt+\widetilde{X}(1)^{T}\hat{H}^{-1}\widetilde{X}(1)\Big\}\epsilon\nonumber \\
 & +(-1)^{m+1}\frac{2\lambda}{n}\epsilon^{T}\int_{0}^{1}\int_{0}^{1}\hat{Q}(t,s)\hat{Q}^{+}\beta_{0}^{(m)}(t)\hat{Q}^{+}\hat{U}(s)dtds.\nonumber 
\end{align}

\end{lemma}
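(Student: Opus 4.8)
The plan is to prove part (a) by producing $\int_0^1{\bf X}(t)\{\hat\beta(t)-\beta_0(t)\}\,dt$ in closed form, and then obtain part (b) simply by squaring that identity. Three structural facts will be used repeatedly. First, since $\hat B=\tfrac1n\widetilde X(1)^T\hat H^{-1}\widetilde X(1)$ with $\hat H=\tfrac1n\widetilde X(1)\widetilde X(1)^T$, we have $\hat B\widetilde X(1)^T=\widetilde X(1)^T$, so $(I_n-\hat B)\widetilde X(1)^T=0$, and hence (using symmetry and idempotency of $\hat B$, together with $\hat U(t)=(I_n-\hat B)T_0^m{\bf X}(t)$) also $\hat U(t)^T\widetilde X(1)^T=0$ and $\hat U^T\hat B=0$. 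Second, the reproducing identities $\hat U(t)^T\hat U(s)=n\hat Q(t,s)$ and, more generally, $\hat U(s)^TT_0^m{\bf X}(t)=n\hat Q(s,t)$. Third, the resolvent identity $\hat Q^+\hat Q=I-\lambda\hat Q^+$ for $\hat Q^+=(\lambda I+\hat Q)^{-1}$.

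For part (a) I first isolate the $\hat B$--component of $\int{\bf X}(\hat\beta-\beta_0)$. The first-order conditions \eqref{equ:opt2} state that the residual vector ${\bf Y}-\int{\bf X}\hat\beta$ is orthogonal to every $X_i^{(-k)}(1)$, i.e.\ $\widetilde X(1)({\bf Y}-\int{\bf X}\hat\beta)=0$; substituting ${\bf Y}=\int{\bf X}\beta_0+\epsilon$ gives $\widetilde X(1)\int{\bf X}(\hat\beta-\beta_0)=\widetilde X(1)\epsilon$, hence $\hat B\int{\bf X}(\hat\beta-\beta_0)=\tfrac1n\widetilde X(1)^T\hat H^{-1}\widetilde X(1)\epsilon$, which is exactly the $\widetilde X(1)^T\hat H^{-1}\widetilde X(1)$ block appearing in \eqref{equ:expansion0}. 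For the $(I_n-\hat B)$--component I apply the Taylor expansion from the proof of Lemma~\ref{lem:psi} (cf.\ \eqref{equ:Xbtaylor}) to both $\hat\beta$ and $\beta_0\in W_2^m$, namely $\int{\bf X}(\hat\beta-\beta_0)=\widetilde X(1)^T(\widehat\Upsilon(1)-\Upsilon_0(1))+(-1)^m\int_0^1 T_0^m{\bf X}(t)\{\hat\beta^{(m)}-\beta_0^{(m)}\}(t)\,dt$, where $\Upsilon_0(1)=[\beta_0(1),-\beta_0'(1),\dots,(-1)^{m-1}\beta_0^{(m-1)}(1)]^T$; hitting this with $I_n-\hat B$ annihilates the polynomial term and turns $T_0^m{\bf X}$ into $\hat U$.

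It remains to express $\hat\beta^{(m)}-\beta_0^{(m)}$ explicitly, which is the heart of the argument. From Lemma~\ref{lem:psi}(b) the minimizer satisfies $(\lambda I+\hat Q)\hat\beta^{(m)}=\tfrac{(-1)^m}{n}\hat U^T{\bf Y}$. Expanding ${\bf Y}=\int{\bf X}\beta_0+\epsilon$ and applying the Taylor expansion of $\int X_i\beta_0$, the degree-$<m$ polynomial part is killed by $\hat U^T\widetilde X(1)^T=0$ and the reproducing identity leaves $\tfrac1n\hat U^T\int{\bf X}\beta_0=(-1)^m\hat Q\beta_0^{(m)}$; thus $(\lambda I+\hat Q)\hat\beta^{(m)}=\hat Q\beta_0^{(m)}+\tfrac{(-1)^m}{n}\hat U^T\epsilon$. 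Subtracting $(\lambda I+\hat Q)\beta_0^{(m)}$, applying $\hat Q^+$ and using $\hat Q^+\hat Q=I-\lambda\hat Q^+$, one gets $\hat\beta^{(m)}-\beta_0^{(m)}=-\lambda\hat Q^+\beta_0^{(m)}+\tfrac{(-1)^m}{n}\hat Q^+\hat U^T\epsilon$. Plugging this into the $(I_n-\hat B)$--component, recombining with the $\hat B$--component, and invoking $\hat U(t)^T\hat U(s)=n\hat Q(t,s)$ to collapse the $\epsilon$--coefficient, yields \eqref{equ:expansion0}. For part (b) I use $\|\hat\beta-\beta_0\|_{\hat\Gamma}^2=\tfrac1n\|\int{\bf X}(\hat\beta-\beta_0)\|^2$ (with the Euclidean norm on $\mathbb R^n$), square \eqref{equ:expansion0}, and expand into the $\lambda^2$--term, the $\lambda$--cross term, and the $\epsilon$--quadratic term: the identities $\hat U^T\hat B=0$ and $\hat B^2=\hat B$ kill all cross terms between the $\hat U$--block and the $\widetilde X(1)^T\hat H^{-1}\widetilde X(1)$--block and give $(\widetilde X(1)^T\hat H^{-1}\widetilde X(1))^2=n\,\widetilde X(1)^T\hat H^{-1}\widetilde X(1)$, while the reproducing identity rewrites the surviving products of $\hat U$--integrals as the double integrals against $\hat Q$ displayed in \eqref{equ:expansion}.

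The main obstacle is the explicit formula for $\hat\beta^{(m)}-\beta_0^{(m)}$: everything else reduces to it. It rests on two facts that are easy to miss — that left-multiplying the Taylor decomposition of $\int{\bf X}\beta_0$ by $\hat U^T$ annihilates the degree-$<m$ polynomial part, and the resolvent identity $\hat Q^+\hat Q=I-\lambda\hat Q^+$. Once $\int{\bf X}(\hat\beta-\beta_0)$ is split into its $\hat B$-- and $(I_n-\hat B)$--parts and this formula is in hand, the rest of (a) and all of (b) are bookkeeping with the reproducing identity and the idempotency of $\hat B$.
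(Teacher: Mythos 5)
Your proof is correct and follows essentially the same route as the paper's: the same key identity $\hat{\beta}^{(m)}-\beta_{0}^{(m)}=-\lambda\hat{Q}^{+}\beta_{0}^{(m)}+\frac{(-1)^{m}}{n}\hat{Q}^{+}\hat{U}^{T}\epsilon$ derived from the normal equations, $\hat{U}^{T}\widetilde{X}(1)^{T}=0$, and the resolvent identity, followed by squaring for part (b). Your only deviation is organizational — you obtain the intermediate identity \eqref{equ:pd} by projecting onto $\hat{B}$ and $I_{n}-\hat{B}$ rather than by explicitly solving for $\widehat{\Upsilon}(1)-\Upsilon_{0}(1)$ as the paper does, which is a slightly cleaner packaging of the same computation.
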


\begin{proof}
Denote 
\[
\Upsilon_{0}(1)=\Big[\beta_{0}(1),-\beta_{0}'(1),\ldots,(-1)^{m-1}\beta_{0}^{(m-1)}(1)\Big]^{T}.
\]
Direct calculation yields 
\[
\frac{1}{n}\widetilde{X}(1){\bf Y}=\hat{H}\Upsilon_{0}(1)+(-1)^{m}\frac{1}{n}\widetilde{X}(1)\int_{0}^{1}T_{0}^{m}{\bf X}(s)\beta_{0}^{(m)}(s)ds+\frac{1}{n}\widetilde{X}(1)\epsilon.
\]
Combining this with (\ref{equ:beta11}) gives 
\[
\widehat{\Upsilon}(1)-\Upsilon_{0}(1)=(-1)^{m+1}\frac{1}{n}\hat{H}^{-1}\widetilde{X}(1)\int_{0}^{1}T_{0}^{m}X(s)\Big\{\hat{\beta}^{(m)}(s)-\beta_{0}^{(m)}(s)\Big\} ds+\frac{1}{n}\hat{H}^{-1}\widetilde{X}(1)\epsilon.
\]
Therefore, 
\begin{align}
\int_{0}^{1} & {\bf X}(s)\Big\{\hat{\beta}(s)-\beta_{0}(s)\Big\} ds\nonumber \\
 & =\widetilde{X}(1)^{T}\Big\{\widehat{\Upsilon}(1)-\Upsilon_{0}(1)\Big\}+(-1)^{m}\int_{0}^{1}T_{0}^{m}{\bf X}(s)\Big\{\hat{\beta}^{(m)}(s)-\beta_{0}^{(m)}(s)\Big\} ds\nonumber \\
 & =(-1)^{m}\int_{0}^{1}\hat{U}(s)~\Big\{\hat{\beta}^{(m)}(s)-\beta_{0}^{(m)}(s)\Big\}ds+\frac{1}{n}\widetilde{X}(1)^{T}\hat{H}^{-1}\widetilde{X}(1)\epsilon.\label{equ:pd}
\end{align}
Recall that $\hat{Q}^{+}=(\lambda I+\hat{Q})^{-1}$. It follows from
Theorem 1 
that 
\begin{align*}
\hat{\beta}^{(m)}-\beta_{0}^{(m)} & =(-1)^{m}n^{-1}{\bf Y}^{T}\hat{Q}^{+}\hat{U}-\beta_{0}^{(m)}\nonumber \\
 & =\frac{(-1)^{m}}{n}\hat{Q}^{+}\hat{U}^{T}\{\int_{0}^{1}{\bf X}(s)\beta_{0}(s)ds\}-\beta_{0}^{(m)}+(-1)^{m}\frac{1}{n}\epsilon^{T}\hat{Q}^{+}\hat{U}\nonumber \\
 & =\frac{(-1)^{m}}{n}\hat{Q}^{+}\hat{U}^{T}\widetilde{X}(1)^{T}\Upsilon_{0}(1)+\hat{Q}^{+}\hat{Q}\beta_{0}^{(m)}-\beta_{0}^{(m)}+(-1)^{m}\frac{1}{n}\epsilon^{T}\hat{Q}^{+}\hat{U}\nonumber \\
 & =-\lambda\hat{Q}^{+}\beta_{0}^{(m)}+(-1)^{m}\frac{1}{n}\epsilon^{T}\hat{Q}^{+}\hat{U}.
\end{align*}
The last equation follows from the fact that $\widetilde{X}(1)\hat{U}(s)=0$.
Then, this, combing with (\ref{equ:pd}), leads to part (a).
Furthermore, 
part (b) follows that 
\[
\Big\|\hat{\beta}-\beta_{0}\Big\|_{\hat{\Gamma}}^{2}=\frac{1}{n}\Big[\int_{0}^{1}{\bf X}(t)^{T}\Big\{\hat{\beta}(t)-\beta_{0}(t)\Big\} dt\Big]~\Big[\int_{0}^{1}{\bf X}(s)\Big\{\hat{\beta}(s)-\beta_{0}(s)\Big\} ds\Big].
\]
This completes the proof of the lemma.\end{proof}

\begin{lemma}
\label{lem: TrA} If $\lambda^{-1}=O(n)$, then $tr(A)$ is of the
same order of $\sum_{k=1}^{\infty}\frac{s_{k}}{\lambda+s_{k}}$. 
\end{lemma}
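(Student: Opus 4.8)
The plan is to start from the closed form for $tr(A)$ recorded right after Theorem~\ref{thm:test},
\[
tr(A)=\sum_{k=1}^{\infty}\frac{\hat\kappa_k\big(\lambda+\tfrac12\hat\kappa_k\big)}{(\lambda+\hat\kappa_k)^{2}}+\frac m2 ,
\]
and to use the elementary bound $\tfrac12\,h_\lambda(\hat\kappa_k)\le \hat\kappa_k\big(\lambda+\tfrac12\hat\kappa_k\big)/(\lambda+\hat\kappa_k)^{2}\le h_\lambda(\hat\kappa_k)$, where $h_\lambda(x)=x/(\lambda+x)$. Hence $tr(A)$ has the same order as the random spectral sum $\sum_{k\ge1}h_\lambda(\hat\kappa_k)$ (the additive constant $m/2$ not affecting the order, since in the regime of interest $\lambda\to0$ and $\sum_k h_\lambda(s_k)\to\infty$). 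It then remains to show that, with probability tending to one, $\sum_k h_\lambda(\hat\kappa_k)$ is trapped between fixed multiples of the deterministic sum $\sum_k h_\lambda(s_k)$.

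The first reduction passes from $\hat Q$ to $\tilde Q=T_0^m\hat\Gamma T_1^m$. Writing $\hat Q=\tilde Q-T_0^m\hat\Gamma_0 T_1^m$ and noting that $T_0^m\hat\Gamma_0 T_1^m=T_0^m\hat\Gamma_0(T_0^m)^{*}$ is nonnegative definite of rank at most $m$ (since $\hat\Gamma_0=n^{-1}\mathbf X^{T}\hat B\mathbf X$ with $\hat B$ idempotent), we obtain $\hat Q\preceq\tilde Q$. Because $x\mapsto h_\lambda(x)=1-\lambda/(\lambda+x)$ is operator monotone on $[0,\infty)$, this gives $\sum_k h_\lambda(\hat\kappa_k)=tr(h_\lambda(\hat Q))\le tr(h_\lambda(\tilde Q))=\sum_k h_\lambda(\tilde\kappa_k)$, while Weyl's inequality for finite-rank perturbations ($\hat\kappa_k\ge\tilde\kappa_{k+m}$) yields the matching lower bound $\sum_k h_\lambda(\hat\kappa_k)\ge\sum_k h_\lambda(\tilde\kappa_k)-m$. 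So the two spectral sums differ by $O(1)$, which is immaterial.

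The second, and harder, reduction passes from the sample operator $\tilde Q$ to the population operator $Q=T_0^m\Gamma T_1^m$, whose eigenvalues $\kappa_k$ have the same order as $s_k$, the eigenvalues of $T\Gamma T^{*}$, by the remarks in Section~\ref{sec:upper} and Proposition~\ref{prop:kt}; thus $\sum_k h_\lambda(\kappa_k)\asymp\sum_k h_\lambda(s_k)$. Here I would reuse the machinery from the proof of Theorem~\ref{thm:adaptive}: $|||\tilde Q-Q|||=O_p(n^{-1/2})$ under (A1), the uniform bound $\sup_k|\tilde\kappa_k-\kappa_k|\le|||\tilde Q-Q|||$, and, crucially, the refined eigenvalue expansion (5.7) of \cite{hall_07}, which controls the partial sums $\sum_{k\le\varrho}|\tilde\kappa_k-\kappa_k|$ far more tightly than the crude uniform bound. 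One splits $\sum_k h_\lambda(\tilde\kappa_k)$ at a cutoff $\varrho$: for $k\le\varrho$ the refined expansion gives $\tilde\kappa_k\asymp\kappa_k$ on an event of probability tending to one, whence $\sum_{k\le\varrho}h_\lambda(\tilde\kappa_k)\asymp\sum_{k\le\varrho}h_\lambda(\kappa_k)$ by monotonicity of $h_\lambda$; for $k>\varrho$ one uses $h_\lambda\le1$ on the moderate range and $h_\lambda(x)\le x/\lambda$ on the far tail, with $\sum_{k>\varrho}\tilde\kappa_k$ controlled through $tr(\tilde Q)=tr(Q)+O_p(n^{-1/2})$, a Ky~Fan bound on $\sum_{k\le\varrho}\tilde\kappa_k$, the summability $\sum_k\kappa_k=E\langle T_0^mX,T_1^mX\rangle<\infty$, the decay rate of $\kappa_k$, and the hypothesis $\lambda^{-1}=O(n)$.

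The main obstacle is exactly this last step: $\sum_k x/(\lambda+x)$ is a bulk spectral functional, and one must show that order-$n^{-1/2}$ eigenvalue fluctuations and the discarded tail do not alter, up to absolute constants, the effective count of eigenvalues exceeding $\lambda$. This forces the cutoff $\varrho$ to be tuned case by case (polynomial $s_k\asymp k^{-2r}$ versus exponential $s_k\asymp e^{-2rk}$), and it is precisely where the restriction $\lambda^{-1}=O(n)$ does its work; the first two reductions are, by comparison, routine.
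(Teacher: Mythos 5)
Your route is genuinely different from the paper's. The paper never compares eigenvalues one by one: after the same two preliminary reductions that you make (the closed form for $tr(A)$ and the finite-rank passage from $\hat Q$ to $\tilde Q$), it writes $\sum_k\tilde\kappa_k/(\lambda+\tilde\kappa_k)$ as the Hilbert--Schmidt inner product $\int\!\!\int\tilde Q\,\tilde Q^{+}$, expands it into four terms around $\int\!\!\int Q\,Q^{+}=\sum_k\kappa_k/(\lambda+\kappa_k)$, and controls the cross terms with the resolvent identity $\tilde Q^{+}-Q^{+}=-(I+Q^{+}\Delta)^{-1}Q^{+}\Delta Q^{+}$, the Hall--Horowitz bounds $\|(I+Q^{+}\Delta)^{-1}\|=O_p(1)$ and $\hat\Delta_{jk}=O_p(n^{-1/2}\kappa_j^{1/2}\kappa_k^{1/2})$; the hypothesis $\lambda^{-1}=O(n)$ enters exactly once, to absorb the fourth term $O_p(n^{-1}\lambda^{-1}\sum_k\kappa_k/(\lambda+\kappa_k))$. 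Note also that the paper's own decomposition contains sign-uncontrolled terms of full order, so it too only concludes the one-sided bound $\sum_k\tilde\kappa_k/(\lambda+\tilde\kappa_k)=O_p(\sum_k\kappa_k/(\lambda+\kappa_k))$.

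The gap in your sketch is in the step you yourself flag as the hard one, and it is quantitative, not cosmetic. The event on which the refined expansion gives $\tilde\kappa_k\asymp\kappa_k$ for $k\le\varrho$ requires $n^{1/2}\kappa_\varrho\to\infty$, forcing $\varrho=o(n^{1/(4r)})$ (the paper uses $\varrho\asymp n^{1/(4r+1)}$ in Theorem 5). But the effective dimension of the spectral sum is $k^{*}\asymp\lambda^{-1/(2r)}\asymp n^{2/(4r+1)}$, the index at which $\kappa_k$ crosses $\lambda$, and $\varrho\ll k^{*}$. So the head $k\le\varrho$, where you have individual eigenvalue control, contributes a vanishing fraction of $\sum_k s_k/(\lambda+s_k)\asymp k^{*}$ and carries essentially no weight; meanwhile, applying $h_\lambda(x)\le x/\lambda$ to everything beyond $\varrho$ gives $\lambda^{-1}\sum_{k>\varrho}\tilde\kappa_k\gtrsim\lambda^{-1}\varrho^{1-2r}\asymp n^{(2r+1)/(4r+1)}$, which overshoots the target $n^{2/(4r+1)}$ for every $r>1/2$. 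The argument can be rescued for the upper bound, but only with a second cutoff at $k^{*}$: use $h_\lambda\le1$ on $\varrho<k\le k^{*}$ (cardinality $\lesssim k^{*}$), and on $k>k^{*}$ bound $\sum_{k>k^{*}}\tilde\kappa_k\le tr\bigl(\tilde Q(I-P)\bigr)=\sum_{j>k^{*}}(\kappa_j+\tilde\Delta_{jj})=(1+O_p(n^{-1/2}))\sum_{j>k^{*}}\kappa_j$, with $P$ the projection onto the top-$k^{*}$ eigenspace of $Q$, so that $\lambda^{-1}\sum_{k>k^{*}}\tilde\kappa_k=O_p(\lambda^{-1}(k^{*})^{1-2r})=O_p(k^{*})$. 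You name these ingredients (Ky Fan, trace comparison) but do not assemble them, and without the second cutoff the tail estimate fails by a polynomial factor. The two-sided ``same order'' claim is out of reach by your route in any case, since no individual control of $\tilde\kappa_k$ is available on $(\varrho,k^{*}]$, where the bulk of the sum lives.
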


\begin{proof}
In Theorem 2,
we have shown that 
\[
tr(A)=O(\sum_{k=1}^{\infty}\frac{\hat{\kappa}_{k}}{\lambda+\hat{\kappa}_{k}}).
\]
Define that $\tilde{Q}=T_{0}^{m}\hat{\Gamma}T_{1}^{m}$. Noting that
the eigenvalues of $\hat{Q}=T_{0}^{m}\big(\hat{\Gamma}-\hat{\Gamma}_{0}\big)T_{1}^{m}$
and $\tilde{Q}$ have the same decay rate. If we write $\tilde{Q}(t,s)=\sum_{j=1}^{\infty}\tilde{\kappa}_{j}\tilde{\phi}_{j}(t)\tilde{\phi}_{j}(s)$,
then $tr(A)$ is of the same order as $\sum_{k=1}^{\infty}\frac{\tilde{\kappa}_{k}}{\lambda+\tilde{\kappa}_{k}}$.
On the other hand, recall that linear operator $Q=T_{0}^{m}\Gamma T_{1}^{m}$.
Following spectral theorem, we have $Q(t,s)=\sum_{j=1}^{\infty}\kappa_{j}\phi_{j}(t)\phi_{j}(s)$.
$\{\kappa_{k}\}$ and $\{s_{k}\}$ have the same decay rate. So we
only need to show that $\sum_{k=1}^{\infty}\frac{\tilde{\kappa}_{k}}{\lambda+\tilde{\kappa}_{k}}=O_{p}(\sum_{k=1}^{\infty}\frac{\kappa_{k}}{\lambda+\kappa_{k}})$.

Let $Q^{+}=(Q+\lambda I)^{-1}$ and $\tilde{Q}^{+}=(\tilde{Q}+\lambda I)^{-1}$.
It is easy to see that $Q^{+}(t,s)=\sum_{j=1}^{\infty}\frac{1}{\lambda+\kappa_{j}}\phi_{j}(t)\phi_{j}(s)$
and $\tilde{Q}^{+}(t,s)=\sum_{j=1}^{\infty}\frac{1}{\lambda+\tilde{\kappa}_{j}}\tilde{\phi}_{j}(t)\tilde{\phi}_{j}(s)$.
Then 
\begin{align*}
\sum_{k=1}^{\infty}\frac{\tilde{\kappa}_{k}}{\lambda+\tilde{\kappa}_{k}} & =\int\int\tilde{Q}(s,t)\tilde{Q}^{+}(s,t)dsdt\\
 & =\int\int Q(s,t)Q^{+}(s,t)dsdt\\
 & \quad\quad+\int\int Q^{+}(s,t)(\tilde{Q}-Q)(s,t)dsdt+\int\int(\tilde{Q}^{+}-Q^{+})(s,t)Q(s,t)dsdt\\
 & \quad\quad+\int\int(\tilde{Q}-Q)(s,t)(\tilde{Q}^{+}-Q^{+})(s,t)dsdt.
\end{align*}
We are going to show that all four terms above in the last equation
are either of the same order of or of $\sum_{k=1}^{\infty}\frac{\kappa_{k}}{\lambda+\kappa_{k}}$
or smaller than that.


For the first term, it is easy to see that 
\begin{align*}
\int\int Q(s,t)Q^{+}(s,t)dsdt & =\sum_{k=1}^{\infty}\frac{\kappa_{k}}{\lambda+\kappa_{k}}.
\end{align*}

For the second term, let $\Delta(s,t)=(\tilde{Q}-Q)(s,t)$ and $\hat{\Delta}_{jk}=|\int\int\Delta(s,t)\phi_{j}(s)\phi_{k}(t)dsdt|$.
It follows Section 5.3 of \cite{hall_07} that
\[
\hat{\Delta}_{jj}=|\int\int\Delta(s,t)\phi_{j}(s)\phi_{j}(t)|=O_{p}(n^{-1/2}\kappa_{j}).
\]
And similarly we can show that $\hat{\Delta}_{jk}=O_{p}(n^{-1/2}\kappa_{j}^{1/2}\kappa_{k}^{1/2})$
for any $j\neq k$, which will be used later in calculating the order
of the fourth term. The second term becomes 
\begin{align*}
 & \int\int Q^{+}(s,t)(\tilde{Q}-Q)(s,t)dsdt\\
 & \quad=\sum_{k=1}^{\infty}\frac{1}{\lambda+\kappa_{k}}\int\int\Delta(s,t)\phi_{j}(s)\phi_{j}(t)dsdt\\
 & \quad\leq O_{p}(n^{-1/2}\sum_{k=1}^{\infty}\frac{\kappa_{k}}{\lambda+\kappa_{k}}).
\end{align*}

For the third term, we refer to (6.7) of \cite{hall_05} that $||(I+Q^{+}\Delta)^{-1}||=O_{p}(1)$.
Here $||\cdot||$ as a norm of a functional from $L_{2}[0,1]$ to
itself, is defined as 
\[
||\chi||=\sup_{\phi\in L_{2}[0,1],||\phi||=1}||\chi(\phi)||.
\]
Noting that $\tilde{Q}^{+}-Q^{+}=-(I+Q^{+}\Delta)^{-1}Q^{+}\Delta Q^{+}$,
then
\begin{align*}
 & \int\int(\tilde{Q}^{+}-Q^{+})(s,t)Q(s,t)dsdt\\
 & \quad=-\sum_{k=1}^{\infty}\kappa_{k}\int\int(I+Q^{+}\Delta)^{-1}Q^{+}\Delta Q^{+}(s,t)\phi_{k}(s)\phi_{k}(t)dsdt\\
 & \quad=-\sum_{k=1}^{\infty}\frac{\kappa_{k}}{\lambda+\kappa_{k}}\int\int(I+Q^{+}\Delta)^{-1}Q^{+}\Delta(s,t)\phi_{k}(s)\phi_{k}(t)dsdt\\
 & \quad\leq\sum_{k=1}^{\infty}\frac{\kappa_{k}}{\lambda+\kappa_{k}}||(I+Q^{+}\Delta)^{-1}Q^{+}\Delta(s,t)\phi_{k}(s)||\,||\phi_{k}(t)||\\
 & \quad=O_{p}(\sum_{k=1}^{\infty}\frac{\kappa_{k}}{\lambda+\kappa_{k}}).
\end{align*}
The last equation follows from the fact that 
\begin{align*}
 & ||(I+Q^{+}\Delta)^{-1}Q^{+}\Delta(s,t)\phi_{k}(s)||\\
 & \quad=||\phi_{k}(s)-(I+Q^{+}\Delta)^{-1}\phi_{k}(s)||\\
 & \quad\leq||\phi_{k}(s)||+||(I+Q^{+}\Delta)^{-1}||\,||\phi_{k}(s)||\\
 & \quad=1+||(I+Q^{+}\Delta)^{-1}||.
\end{align*}

For the last term, by Cauchy-Schwarz inequality
\begin{align*}
\int\int(\tilde{Q}-Q) & (s,t)(\tilde{Q}^{+}-Q^{+})(s,t)dsdt\\
 & \leq\big\{\int\int(\tilde{Q}-Q)^{2}(s,t)dsdt\cdot\int\int(\tilde{Q}^{+}-Q^{+})^{2}(s,t)dsdt\big\}^{1/2}\\
 & \leq n^{-1/2}\big\{\int\int(\tilde{Q}^{+}-Q^{+})^{2}(s,t)dsdt\big\}^{1/2}\\
 & =n^{-1/2}\big\{\sum_{k=1}^{\infty}||-(I+Q^{+}\Delta)^{-1}Q^{+}\Delta Q^{+}\phi_{k}||^{2}\big\}^{1/2}\\
 & \leq n^{-1/2}||(I+Q^{+}\Delta)^{-1}||^{-1/2}\big\{\sum_{k=1}^{\infty}||Q^{+}\Delta Q^{+}\phi_{k}||^{2}\big\}^{1/2}.
\end{align*}
Recall that $\hat{\Delta}_{jk}=O_{p}(n^{-1/2}\kappa_{j}^{1/2}\kappa_{k}^{1/2})$
for any $j\neq k$. Then, 
\begin{align*}
||Q^{+}\Delta Q^{+}\phi_{k}||^{2} & =\int\big\{\int\int Q^{+}\Delta(s,u)\sum_{j=1}^{\infty}\frac{1}{\lambda+\kappa_{j}}\phi_{j}(u)\phi_{j}(t)\phi_{k}(t)dtdu\big\}^{2}ds\\
 & =\frac{1}{(\lambda+\kappa_{k})^{2}}\int\big\{\int Q^{+}\Delta(s,u)\phi_{k}(u)du\big\}^{2}ds\\
 & =\frac{1}{(\lambda+\kappa_{k})^{2}}\int\big\{\int\int\sum_{j=1}^{\infty}\frac{1}{\lambda+\kappa_{j}}\phi_{j}(s)\phi_{j}(v)\Delta(v,u)\phi_{k}(u)dudv\big\}^{2}ds\\
 & =\frac{1}{(\lambda+\kappa_{k})^{2}}\sum_{j=1}^{\infty}\frac{\hat{\Delta}_{jk}^{2}}{(\lambda+\kappa_{j})^{2}}\\
 & =O_{p}(\frac{\kappa_{k}}{(\lambda+\kappa_{k})^{2}}n^{-1}\sum_{j=1}^{\infty}\frac{\kappa_{j}}{(\lambda+\kappa_{j})^{2}})\\
 & =O_{p}\big(n^{-1}\lambda^{-2}(\sum_{j=1}^{\infty}\frac{\kappa_{j}}{\lambda+\kappa_{j}})\frac{\kappa_{k}}{\lambda+\kappa_{k}}\big).
\end{align*}
Therefore
\[
\int\int(\tilde{Q}-Q)(s,t)(\tilde{Q}^{+}-Q^{+})(s,t)dsdt=O_{p}(n^{-1}\lambda^{-1}\sum_{j=1}^{\infty}\frac{\kappa_{j}}{\lambda+\kappa_{j}}).
\]
 
All together we show that $\sum_{k=1}^{\infty}\frac{\tilde{\kappa}_{k}}{\lambda+\tilde{\kappa}_{k}}=O_{p}(\sum_{k=1}^{\infty}\frac{\kappa_{k}}{\lambda+\kappa_{k}})$
provided that $\lambda^{-1}=O(n)$.
\end{proof}

\bibliographystyle{myrefstyle}
\bibliography{TestReference}

\end{document}